\documentclass[unnumsec,webpdf,contemporary,large]{oup-authoring-template}%

\graphicspath{{Fig/}}

\theoremstyle{thmstyleone}%
\newtheorem{theorem}{Theorem}
\newtheorem{proposition}[theorem]{Proposition}%
\theoremstyle{thmstyletwo}%
\newtheorem{remark}{Remark}%
\theoremstyle{thmstylethree}%

\usepackage{graphicx}
\usepackage{bbm}
\usepackage{rotating} 
\usepackage{adjustbox}
\usepackage{amssymb}
\usepackage{multirow} \usepackage{array}
\usepackage{threeparttable}
\usepackage{booktabs}
\usepackage{bbm}

\begin{document}

\journaltitle{Preprint}
\DOI{DOI added during production}
\copyrightyear{2026}
\pubyear{YEAR}
\vol{XX}
\issue{x}
\access{Published: Date added during production}
\appnotes{Paper}

\firstpage{1}


\title[Short Article Title]{Effect-Size Estimation and Confounding-Aware Permutations for PERMANOVA}

\author[1,2,]{Sithija Manage}
\author[2]{Emma Accorsi}
\author[2]{Jacob Nearing}
\author[4]{Siyuan Ma}
\author[2,3]{Eric Franzosa}
\author[2,3,$\ast$]{Curtis Huttenhower}

\address[1]{\orgdiv{Department of Statistics and Data Science}, \orgname{Cornell University}, \orgaddress{\street{Ithaca}, \state{NY 14850}, \country{United States}}}
\address[2]{\orgdiv{Department of Biostatistics}, \orgname{Harvard T.H. Chan School of Public Health}, \orgaddress{\street{Boston}, \state{MA 02115}, \country{United States}}}
\address[3]
{\orgdiv{Harvard Chan Microbiome in Public Health Center}, \orgname{Harvard T.H. Chan School of Public Health}, \orgaddress{\street{Boston}, \state{MA 02115}, \country{United States}}}
\address[4]
{\orgdiv{Department of Biostatistics}, \orgname{Vanderbilt University}, \orgaddress{\street{Nashville}, \state{TN 37203}, \country{United States}}}

\corresp[$\ast$]{Corresponding author. \href{email:email-id.com}{chuttenh@hsph.harvard.edu}}

\received{Date}{0}{Year}
\revised{Date}{0}{Year}
\accepted{Date}{0}{Year}



\abstract{\textbf{Motivation:}
PERMANOVA is widely used to test associations between covariates and high‑dimensional data with arbitrary distributions, such as ecology, metabolomics, or microbial community profiling. In these contexts, its effect size measure (R2) is generally used to indicate how much the high-dimensional measurement covaries with respect to a separate, univariate measure. However, as we show, this conventional effect size is inflated under the null, resulting in values that do not accurately reflect the conventional expectation of "variance explained." Furthermore, standard permutation schemes yield inflated false positives in the presence of non-independent samples, such as repeated measure designs.\\
\textbf{Results:}
We introduce confounding‑aware permutation schemes for both invariant and variant covariates. We additionally develop a causally-grounded, distance‑based effect‑size estimator derived from the Hájek average treatment effect estimator, providing interpretable effect sizes in the units of the chosen dissimilarity metric. Simulations using microbiome profiles demonstrate that these approaches control false positives while maintaining power, outperforming standard PERMANOVA implementations.\\
\textbf{Availability and Implementation:}
An R implementation is freely available in the permaBrioche module of the bioBakery ecosystem {{https://huttenhower.sph.harvard.edu/permabrioche}}.
Contact:
chuttenh@hsph.harvard.edu} 
\keywords{PERMANOVA, Effect size, Confounding, Microbiome}





\maketitle


\section{Introduction}

Permutational multivariate analysis of variance (PERMANOVA) \citep{anderson2001new, anderson2008permanova+} is a popular statistical method for associating covariates of interest with arbitrarily distributed high-dimensional data. It is widely used in fields as diverse as environmental ecology, geology, metabolomics, and microbial community science, all of which frequently yield measurements with such properties. PERMANOVA has thus been adopted in these cases due to its flexibility as a nonparametric, multivariate test that allows use of a distance or dissimilarity metric of choice \citep{anderson2001new}. However, proper implementation and usage of the method is nontrivial in many
common settings, such as repeated-measures or otherwise clustered designs, where
samples from the same subject are not exchangeable and covariates fixed at the
subject level must be tested by permuting whole subjects rather than individual
samples \citep{anderson2003permutation,koh2019distance}. Its behavior is also dependent on the number of samples available relative to the types of covariates. In such cases, the standard effect size measure derived from the estimated coefficient of determination $R^2$ yields unfavorable properties such as inflation under the null and misleading attribution of explained variance. This is frequently the case in large molecular epidemiology studies
\citep{wang2016genome,rothschild2018environment,lloyd2019multi}. While this challenge is well recognized in the methodological literature \citep{kelly2015power}, it has not, to our knowledge, been addressed in most recent studies despite a surge in their employment of PERMANOVA-based methods \citep{kleine2021data}.

Since its introduction, PERMANOVA has inspired various extensions and studies investigating its limitations and applicability, including effect size estimation. The non-centeredness of an unmodified $R^2$ as an effect size (analogous to $\eta^2$ in the parametric ANOVA setting) has been well-observed\citep{AmpliconDataAnalysisOmegaSq, kelly2015power}. Partial $\omega^2$ has been proposed as an unbiased alternative to partial $\eta^2$ for ANOVA \citep{field2005discovering, keppel1991design, olejnik2003generalized}, but this formulation cannot be applied to more complex study designs (e.g. repeated measures), where it overestimates effect sizes. In analyzing compositional differences among ecological communities, ANOSIM \citep{clarke2001change, clarke2008testing} and ECOSIM \citep{gotelli2004ecosim} provide commonly used alternative effect size measures. However, ANOSIMs suffers from small sample size restriction, sensitivity to group homogeneity, and index dependence, while ECOSIM yields variable Type I and Type II errors based on row and column constraints and test index \citep{walters2006comparison, somerfield2021analysis}. A related but distinct inconsistency has been documented for methods that test differential abundance of individual compositional features (e.g. microbial taxa), rather than overall composition: in a comparison of such methods, researchers found that ANOVA \citep{fisher1925smrw}, SIMPER \citep{clarke1993non}, ALDEx2 \citep{fernandes2014unifying}, ANCOM \citep{mandal2015analysis}, and DESeq2 \citep{love2014moderated} often yield incongruent results, implying that biological interpretations are often inconsistent even when applied to the same input abundance data \citep{khomich2021analysing}.

Non-independence of samples can also easily arise in common study designs that utilize PERMANOVA, such as longitudinal studies, other repeated measures, or meta-analyses. In these scenarios, PERMANOVA yields other inaccuracies. Specifically, permutation schemes must
be carefully specified to avoid false positive results; however, it is often unclear how to accurately apply this as an end-user.
For example, suppose researchers are interested in using PERMANOVA to determine if delivery mode in infants is associated with a multivariate outcome in longitudinal data. In this case, subject is a strong confounder as infant delivery mode cannot vary across repeated measurements per subject, and the outcome may exhibit pronounced intra-person consistency and inter-person differentiation \citep{faith2013long, ursell2012interpersonal, zhu2010human}. PERMANOVA calculates a (pseudo-) F statistic that is the ratio of the between-group to the within-group sum of squared distances. The p-value is determined by shuffling the delivery mode labels of the samples and calculating the proportion of F statistics under permutation that are larger than the original F statistic. Under an unadjusted permutation scheme, in which delivery mode labels are randomly re-assigned without regard to the longitudinal structure, the potentially strong subject effects on the outcome are fully attributed to delivery mode. As a result, the null distribution is artificially compressed, yielding false positive associations.

Because the original formulation of PERMANOVA does not address such non-independence, a modest body of literature has emerged to fill this gap. PERMANOVA-S \citep{tang2016permanova} provides a version of confounder adjustment through a regression on residuals combined with Freedman–Lane permutations \citep{freedman1983nonstochastic}, followed by sequential F-tests, firstly partitioning the distance matrix with respect to confounders and secondly with respect to the covariates of interest by fitting linear models to distance matrices. However, for repeated measures their adjustment is limited to within-stratum permutations, which, as demonstrated in Section \ref{sec:results}, is insufficient to address confounding for invariant covariates of interest. \citep{zhu2021constraining} similarly propose including an indicator variable for each set of covariates (compatible with LDM \citep{hu2020testing}) and permuting residuals of Y in the Freedman and Lane scheme, which has admissibility restricted to variant covariates of interest in the repeated measures setting.  The most common computational tool for PERMANOVA, the \texttt{adonis2} function in the  \texttt{vegan} package in R \citep{dixon2003vegan,veganCitationR}, contains two native approaches to adjusting for subject-level confounding - the \texttt{strata} parameter and variable adjustment. However, the former is not usable for complex blocking schemes such as often observed in repeated measures designs and exists largely to maintain backward-compatibility \citep{vegan_discussion600} with older software versions. Moreover, as demonstrated in Section \ref{sec:results}, the latter is unable to properly account for confounding in repeated measures. To address this, we develop a more robust permutation scheme which allows for valid effect size estimation in the setting of repeated-measures samples.

Furthermore, we show that even in the standard, cross-sectional setting, the coefficient of determination calculated in PERMANOVA is inflated under the null hypothesis, with the deviation increasing as the sample size decreases relative to the number of levels in the categorical factor. Thus, we posit that there is a pressing need for interpretable, valid effect size measures for analyzing the association between multivariate outcomes and covariates of interest. Our contribution in this work is twofold: we introduce a causal estimator for interpretable effect size expressed in the units of the chosen distance measure, and we define a new permutation scheme that accounts for sample-level repeated measures that is compatible with both PERMANOVA and our newly introduced effect size estimator. In the process, we also evaluate and compare several distance-based multivariate association methods for repeated-measures designs  (Table \ref{tab:method_comparison}).

 \begin{table*}[htbp]
\centering
\begin{threeparttable}
\caption{Comparison of distance-based multivariate association methods for
repeated-measures designs. $\checkmark$ = supported; $-$ = not supported;
$\circ$ = partial support (see notes).}
\label{tab:method_comparison}
\setlength{\tabcolsep}{4pt}
\renewcommand{\arraystretch}{1.3}
\begin{tabular}{@{}lcccccccc@{}}
\bottomrule
& \multicolumn{2}{c}{\textit{Repeated-measures validity}}
& \multicolumn{3}{c}{\textit{Effect size}}
& \multicolumn{3}{c}{\textit{Inference}} \\
\cmidrule(lr){2-3}\cmidrule(lr){4-6}\cmidrule(lr){7-9}
\textbf{Method}
  & \rotatebox{75}{\parbox{2.4cm}{Valid: invariant\\covariate}}
  & \rotatebox{75}{\parbox{2.4cm}{Valid: variant\\covariate}}
  & \rotatebox{75}{\parbox{2.4cm}{Null-centered\\$\tilde{R}^2$}}
  & \rotatebox{75}{\parbox{2.4cm}{Interpretable\\effect size\tnote{a}}}
  & \rotatebox{75}{\parbox{2.4cm}{Causal / ATE\\interpretation}}
  & \rotatebox{75}{\parbox{2.4cm}{Sequential\\per-term tests\tnote{b}}}
  & \rotatebox{75}{\parbox{2.4cm}{User-specified\\distance metric}}
  & \rotatebox{75}{\parbox{2.4cm}{Multiple distances\\simultaneously}} \\
\midrule
Standard PERMANOVA \citep{anderson2001new}
  & $-$ & $-$ & $-$ & $-$ & $-$ & $\checkmark$ & $\checkmark$ & $-$ \\
\texttt{vegan} G1\tnote{c} \citep{veganCitationR,permute}
  & $-$ & $\checkmark$ & $-$ & $-$ & $-$ & $\checkmark$ & $\checkmark$ & $-$ \\
\texttt{vegan} G2\tnote{d} \citep{veganCitationR,permute}
  & $-$ & $\checkmark$ & $-$ & $-$ & $-$ & $\checkmark$ & $\checkmark$ & $-$ \\
PERMANOVA-S \citep{tang2016permanova}
  & $-$ & $\checkmark$ & $-$ & $-$ & $-$ & $\checkmark$ & $\checkmark$ & $\checkmark$ \\
Zhu et al.\ \citep{zhu2021constraining}
  & $-$ & $\checkmark$ & $-$ & $-$ & $-$ & $\checkmark$ & $\checkmark$ & $-$ \\
(GLMM-)MiRKAT \citep{zhao2015testing,koh2019distance}
  & $\circ$ & $\circ$ & $-$ & $\circ$ & $-$ & $-$ & $\checkmark$ & $\checkmark$ \\
\midrule
\textbf{permaBrioche} (this work)
  & $\checkmark$ & $\checkmark$ & $\checkmark$ & $\checkmark$ & $\checkmark$ & $-$ & $\checkmark$ & $-$ \\
\bottomrule
\end{tabular}
\begin{tablenotes}
\footnotesize
\item[a] Effect size expressed in the units of the chosen dissimilarity metric, beyond
  the unadjusted PERMANOVA $R^2$.
\item[b] Sequential (Type~I) or marginal per-term $p$-values adjusted for other
  covariates in the model (e.g., \texttt{adonis2(..., by = "terms")}). \texttt{permabrioche} currently supports only a joint (omnibus) test of all
right-hand-side covariates, which must be all block-invariant or all
block-variant; sequential and marginal per-term tests (\texttt{by = "terms"} or
\texttt{"margin"}) are not supported.
\item[c] \texttt{adonis2} with \texttt{how(blocks\,=\,subject)}: constrains permutations
  within subjects.
\item[d] 
  First projects out variation attributable to individual subjects using the \texttt{Condition} term, thereby isolating the effect of treatment within subjects. 
 \item[$\circ$] MiRKAT enters the microbiome as a kernel and the trait as the outcome, so it doesn't face the invariant/variant permutation-design problem addressed here. GLMM-MiRKAT (Koh et al., 2019) handles clustering via a mixed model with whole-cluster permutation.
\end{tablenotes}
\end{threeparttable}
\end{table*}


To this end, in Section \ref{sec:methods} we confirm that unadjusted $R^2$, which determines the effect size reported by PERMANOVA, yields inflated estimates under the null when the number of levels $p$ in the categorical variable of interest is large relative to the sample size $n$. We show that an $F$ statistic derived from the proposed adjusted $R^2$, is a monotonically increasing function of the original test statistic $F$ and therefore does not affect $p$-values. In Section \ref{sec:methods:proposedestimator}, we propose an alternative effect size estimator for PERMANOVA in the binary covariate setting based on the causal inference average treatment effect (ATE) Hájek estimator \citep{hajek1971comment}. In Section \ref{sec:methods:permutationschemes}, we introduce our proposed permutation schemes, and in Section \ref{sec:results} we perform simulations with real-world longitudinal microbiome data to explore repeated-measures adjustment for different types of covariates of interest. We demonstrate that the default permutations used in a PERMANOVA test result in excess false positive findings in the presence of repeated measures and explain how permutations in PERMANOVA should be specified to mitigate two different repeated-measures structures. Through this study, we aim to assist in the correct use of PERMANOVA in the presence of confounding, provide an alternative effect size measure which is also compatible with our proposed permutation scheme, and promote detailed, transparent reporting of PERMANOVA-type results. 


\section{Methods}\label{sec:methods}

\subsection{Background}

Permutational Multivariate Analysis of Variance (PERMANOVA) \citep{anderson2001new, anderson2008permanova+} is a semiparametric method used to compare groups of observations and test the null hypothesis $H_0$ that the centroids and dispersions across groups are equivalent. P-values are obtained using a distribution-free permutation technique, and the method allows for estimation of effect sizes, interaction terms, and distribution-free inference. This is in contrast to Multivariate Analysis of Variance (MANOVA), which assumes multivariate normality of the data. We follow \cite{anderson2008permanova+} to formally introduce the method in the one-factor setting.

The remainder of this section is dedicated to defining the test statistic used in PERMANOVA.
Let $\mathbf{Y}$ be an $N\times p$ data matrix consisting of $N$ sampling units and $p$ variables. To conduct a PERMANOVA, we must first define a dissimilarity measure. For example, one such choice that is extensively used by the microbiome community is Bray-Curtis dissimilarity \citep{bray1957ordination}:
\begin{equation}BC_{ij} = 1 - \frac{2C_{ij}}{S_i+S_j} = 1 - \frac{2\sum_{v = 1}^p\min (Y_{iv},Y_{jv})}{\sum_{v=1}^p(Y_{iv}+Y_{jv})} \in [0,1],
\nonumber
\end{equation}

where $Y_{iv}$ and $Y_{jv}$ are the observations for variable $v$ for sampling units $i$ and $j$, respectively. In the shorthand notation, $C_{ij}$ is the sum of the lesser values of each variable. $S_i$ and $S_j$ are the sums across all variables for the respective samples. These distances can be calculated for every pair of sampling units such that we define $\mathbf{D} =\{BC_{ij},\} = \{d_{ij},\}, i = 1,...,N; j = 1,...,N$. 
We then square and center the distance matrix to allow us to calculate the sums of squares partitions without requiring the original data. This matrix, denoted by $\mathbf{G}$, is referred to as ``Gower's matrix" \citep{gower1966some}. Let
$\mathbf{A} = \{a_{ij}\} = \left\{-\frac{1}{2}d_{ij}^2\right\}.$ 
Then 
$\mathbf{G} = \{g_{ij} = a_{ij} - \overline{a}_{i\cdot} - \overline{a}_{\cdot j} + \overline{a}_{..}\},$ where $\overline{a}_{i\cdot} $ is the row-mean of row $i$, $\overline{a}_{\cdot j}$ is the column-mean of column $j$, and $\overline{a}_{..}$ is the overall mean. Assuming there are $g$ groups in our analysis, we define $\mathbf{X}$ as an ($N\times g$) design matrix whose columns are indicators coding groups of an ANOVA factor in a one-way design. The final object for calculating the test statistic is the hat matrix, as derived from the normal equations in linear regression: 
$\mathbf{H = X[X^TX]^{-1}X^T}.$

Finally, we can define the pseudo $F$ statistic in PERMANOVA as 
\begin{equation}F = (SS_A/SS_R)\cdot [(N-g)/(g-1)],\label{eq:pseudoF}\end{equation}

where $SS_T = tr(\mathbf{G}); SS_A = tr(\mathbf{HG});$ $SS_R = tr[(\mathbf{I - H)G}]$ and $SS_T = SS_A + SS_R.$ With sufficient data and permutation iterations, this pseudo $F$ statistic can be used to test the null hypothesis $H_0$ that the multivariate distributions of the groups are the same according to the chosen 
dissimilarity measure. The p-value can be obtained by calculating this pseudo $F$ statistic for random permutations of the $g$ group labels across observations to approximate the null distribution. We refer to this as ``unadjusted" PERMANOVA. \label{sec:defPERMANOVA}

\subsection{Proposed Effect-Size Estimator
}

We characterize some limitations of $R^2$ from PERMANOVA and subsequently introduce and justify our proposed effect‑size estimator. This estimator can be implemented with unadjusted permutations or our proposed permutation schemes introduced in Section \ref{sec:methods:permutationschemes}.

\subsubsection{$R^2$ is inflated under PERMANOVA $H_0$}
\label{sec:methods:r2inflated}

In this section we derive the expectation of the PERMANOVA test statistic under $H_0$ when using Euclidean distance.
We can more directly obtain PERMANOVA partitioning using sums of squared dissimilarities \citep{anderson2014permutational}.
Specifically, 
$SS_T = \sum_{i=1}^{(N-1)}\sum_{j=(i+1)}^Nd_{ij}^2/N,$
where $N$ is the total number of samples, and $d_{ij}$ denotes the pairwise dissimilarities between samples. For the one-way case, $SS_W = \sum_{l=1}^gW_l,$ where $W_l$ is the within-group sum-of-squares for group $l$: $W_l = \sum_{i=1}^{(N-1)}\sum_{j=(i+1)}^N\delta_{ij}^{[l]}d_{ij}^2/n_l,$
where $n_l$ is the sample size for group $l$ and $\delta_{ij}^{[l]}$ is an indicator such that it equals 1 if and only if the samples $i$ and $j$ are both in group $l.$ As before, we utilize $SS_A = SS_T - SS_W$ to obtain the pseudo F statistic in (\ref{eq:pseudoF}). 

Using $R^2 = SS_A/SS_T,$ we observe that \textit{the $R^2$ computed using PERMANOVA with Euclidean distance matches exactly the $R^2$ from a linear model fit to the same data} \citep{anderson2001new}.  For completeness, we have provided a short proof of this equivalence in Appendix \ref{sec:app:R2equivalence}.

Now that we have shown the equivalence between $R^2$ from PERMANOVA with Euclidean distance and $R^2$ obtained from the sums-of-squares decomposition of a linear model with one categorical factor (with $g$ levels), aggregated across all $p$ response variables, we proceed by illustrating that the null distribution of $R^2$ under $H_0$ is upwardly inflated. We consider the Gaussian data case where $Y_i\in \mathbb{R}^p \sim \text{i.i.d. } \mathcal{N}( \mu, \Sigma)$, where $\Sigma = \sigma^2 \mathbf{I}$ for some variance $\sigma^2$.
Under $H_0$, for PERMANOVA with $g$ groups and a multivariate response of dimension $p$, the sums of squares satisfy

\[SS_{A,v} = \chi^2_{g-1}, \qquad SS_{R,v} = \chi^2_{N-g},\] which summed across $p$ independent response dimensions is
$SS_A \sim \chi^2_{p(g-1)}, \text{ and } SS_R \sim \chi^2_{p(N-g)}.$

Thus by taking the ratio of one chi-square variable to the sum of two independent chi-square variables,
\[
R^2_{PERMANOVA} = \frac{SS_A}{SS_A + SS_R} \;\sim\; \text{Beta}\!\left( \frac{p(g-1)}{2}, \frac{p(N-g)}{2} \right)
\]

under $H_0.$ Moreover, the expectation is invariant with respect to $p:$
\[
\mathbb{E}_{H_0}[R^2_{PERMANOVA}] = \frac{p(g-1)}{p(g-1) + p(N-g)} = \frac{g-1}{N-1}.
\]
But the variance decreases as $p$ grows, concentrating the distribution around its mean:
\[
\mathrm{Var}_{H_0}(R^2_{PERMANOVA}) = \frac{(g-1)(N-g)}{(N-1)^2\left(\frac{p(N-1)}{2}+1\right)}.
\]

This result follows from the fundamental \citep{johnson1995continuous} relationship between ratios of independent chi-squared variables and the Beta distribution:
\[
\frac{\chi^2_{a}}{\chi^2_{a} + \chi^2_{b}} \sim \text{Beta}\!\left( \frac{a}{2}, \frac{b}{2} \right),
\]
where in our case $a = p(g-1)$ and $b = p(N-g)$.



This indicates that when PERMANOVA is applied to data where there are truly no group differences, the true effect size is zero, yet the observed $R^2$ will follow a null distribution given by
\[
R^2 \sim \text{Beta}\!\left( \frac{p(g-1)}{2}, \frac{p(N-g)}{2} \right).
\]

Towards illustration, we conducted the following simulation. 
For each of $1000$ replicates, we generated an $N \times p$ data matrix $Y$ with entries
\[
Y_{ij} \sim
\begin{cases}
\mathcal{N}(
\mu = 0,\sigma = 1), & \text{Euclidean case}, \\
\mathrm{Pois}(\lambda = 10), & \text{Bray--Curtis case},
\end{cases}
\qquad \text{independently}.
\]
We considered sample sizes $N \in \{20, 40, 100\}$, number of groups $g = 5$, and $p = 30$ response variables. Group labels were randomly assigned to samples while maintaining equal group sizes. PERMANOVA was then performed using \texttt{adonis2} in $\mathbf{R}$, and the $R^2$ values were extracted to assess differences among groups. Because group labels were assigned at random, the null hypothesis $H_0$ of no group differences holds in all replicates.
Figure~\ref{fig:betaOverlayEuc} depicts the empirical distribution of $R^2$ across the $1{,}000$ simulations, and a similar figure for Bray-Curtis dissimilarity can be found in Appendix \ref{sec:app:R2equivalence} (Fig. \ref{fig:betaOverlayBray}).

\begin{figure}[h!]
    \centering
\includegraphics[width=\linewidth]{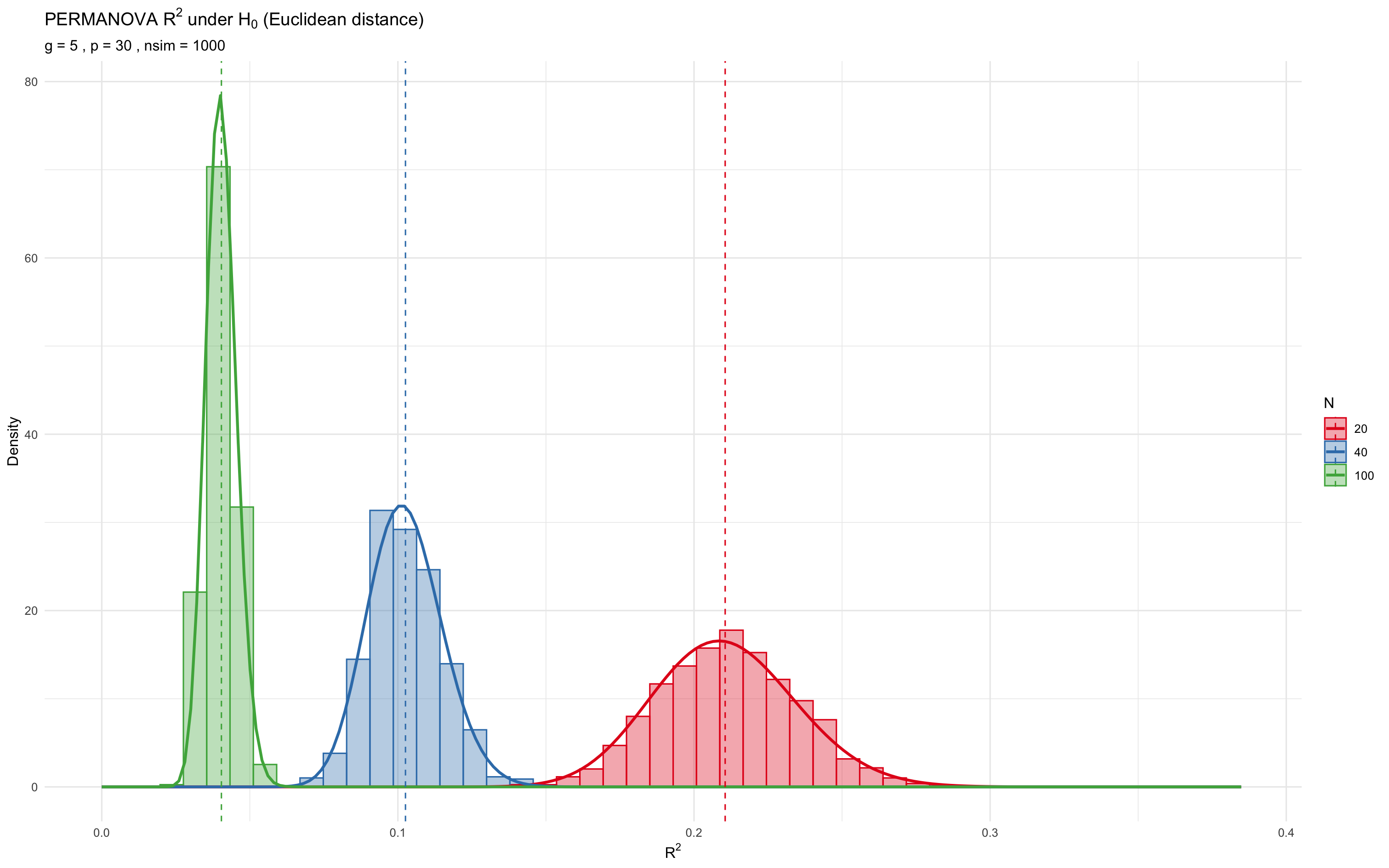}
        \caption{Simulated PERMANOVA $R^2$ distribution under $H_0$, with overlaid $\text{Beta}$ distribution and mean. Red, blue, and green correspond to $N=20$, $N=40$, and $N=100$, respectively. (Euclidean distance)
}
        \label{fig:betaOverlayEuc}
\end{figure}

We observe that the empirical $R^2$ values closely align with the theoretically-expected Beta distribution, highlighting the inflation of $R^2$ from PERMANOVA under $H_0$.

\subsubsection{Parametric null-centered ${R}^2$}
\label{sec:methods:null-centeredR2}

Although bias‑corrected ANOVA effect sizes exist (e.g., $\omega^2$, see Appendix \ref{sec:app:omega2}), at the time of publication, we are not aware of any work that defines a PERMANOVA‑specific effect size by subtracting the exact PERMANOVA $R^2$ null expectation
in the Euclidean setting. To this end, we propose a simple, null-centered adjustment denoted by $\Tilde{R}^2$:

\[\Tilde{R}^2 = R^2 - \frac{g-1}{N-1},\]

where $R^2$ is the unadjusted estimate (obtained from running \texttt{adonis2}, for example). Recall that $g$ is the number of levels in the categorical factor and $N$ is the sample size. Some benefits of using this null‑centered statistic are that a truly zero effect size corresponds to $\tilde R^2=0$, the statistic is centered at zero under random group assignment, and its magnitude reflects the amount of variance explained in \textit{excess} of what would be expected purely from chance partitioning of samples into groups. Even when using a distance other than Euclidean, we can similarly use the estimated null distribution of $R^2$ across permutations to estimate and adjust for the deviation. 
Figures \ref{fig:sub1} and \ref{fig:sub2} in Appendix \ref{sec:app:R2equivalence} empirically validate the following three claims on real metagenomic data from human stool samples, as provided by the \texttt{curatedMetagenomicData} R/Bioconductor package:

\begin{itemize}
    \item Conducting PERMANOVA with Euclidean distance is equivalent to applying a linear model with a single categorical factor in terms of $R^2$.
    \item Unadjusted $R^2$ is upwardly inflated for small sample sizes relative to the number of groups.
    \item Using $\tilde{R}^2$ substantially reduces this deviation from zero.
\end{itemize}

Further details for this analysis can be found in Appendix \ref{sec:app:QMD}.

\subsubsection{Proposed Estimator Definition}
\label{sec:methods:proposedestimator}

Although we address the inflated effect size of $R^2$ in PERMANOVA in the previous section, there still remains the issue of interpretability. PERMANOVA $R^2$ gives a \textit{descriptive} quantification such as ``[$R^2$]\% of the variability in the microbiome is attributable to medication.”
To provide a more concrete interpretation, we propose a novel estimator based in causal inference that provides an interpretable effect size measure for distance‑based multivariate analyses. Specifically, we employ a Hájek-style estimator \citep{hajek1971comment}, for which the details are given in the remainder of this section.

We define our estimator for a binary treatment under the Stable Unit Treatment Value Assumption (SUTVA) framework \citep{rubin1980randomization}, with the aim of supporting causal interpretation. It is important to note, however, that the estimator remains valid even when SUTVA does not hold and a causal interpretation is not justified.
The SUTVA consists of two components: no
interference (i.e., a treatment applied to one unit has no effect on an outcome for any
other units) and no hidden variation of treatment (i.e., for each unit, there are no different forms or versions of each treatment, which lead to different potential outcomes) \citep{imbens2015causal}. 
 We invoke SUTVA to ensure that the average treatment effect estimator is well-defined and interpretable in our analysis.

Furthermore, we assume \emph{unconfoundedness} and \emph{overlap}:  
\begin{itemize}
    \item \textbf{Unconfoundedness (Ignorability):} Conditional on the set of observed covariates $X$, the treatment assignment is independent of the potential outcomes, i.e., $
    \{Y(0), Y(1)\} \;\perp\!\!\!\perp\; T \mid X,$
    where $Y(0), Y(1)\in \mathbb{R}$ denote the potential outcomes under control and treatment, respectively.
    This assumption implies that, after conditioning on $X$, there are no unobserved confounders driving both treatment assignment and outcomes.  
    \item \textbf{Overlap (Positivity):} For all covariate values $X$ in the support, there is a positive probability of receiving each treatment, i.e., $
    0 < P(T=1 \mid X) < 1 .$
    This ensures that units with any given covariate profile could plausibly be observed in both the treated and control groups, allowing for meaningful causal comparisons.  
\end{itemize}

Together with SUTVA, these assumptions provide the conditions under which our Hájek estimator \citep{hajek1971comment}, given in Algorithm \ref{alg:Hajek}, can be interpreted causally.
Let $X_i \in \mathbb{R}^p$ denote the features or profiles for sample $i$ (e.g., a microbiome or omics sample), and $T_i \in \{0,1\}$ indicate the binary treatment or group assignment (e.g., $T_i=1$ for treatment, $T_i=0$ for control). Let $C_i$ be a vector of confounders/covariates for sample $i$.

Fix a user-specified dissimilarity $d(\cdot,\cdot)$ and define the \emph{population} control centroid
$
\mu_0^\star := \mathbb{E}[\,X_i \mid T_i = 0\,] \in \mathbb{R}^p.
$
Define potential distance outcomes $Y_i(t) := d\big(X_i(t),\,\mu_0^\star\big)$ for $t\in\{0,1\}$. Our target parameter is the average treatment effect (ATE)
$
\tau \;:=\; \mathbb{E}\!\left[\,Y_i(1) - Y_i(0)\,\right]
\;=\; \mathbb{E}\!\left[\, d\big(X_i(1),\mu_0^\star\big) - d\big(X_i(0),\mu_0^\star\big) \,\right].
$

A positive $\tau$ indicates that treatment moves samples farther, on average, from the control centroid on the $d$-scale. Furthermore, in the Euclidean distance setting, we derive the decomposition of the effect into a mean component and a variance component: $\tau = \tau_{loc} + \tau_{disp}$.

To estimate $\tau$, we use a \emph{sample-split Hájek} procedure that (i) estimates a control
anchor from training controls, (ii) forms test-set outcomes $Y_i = d(X_i,\hat\mu_{0,\mathrm{train}})$,
(iii) estimates $\hat e_i \approx \Pr(T=1\mid C_i)$ on the training split, and
(iv) computes a normalized IPW (Hájek) ATE on the test split.
This yields an effect size in the \emph{units of $d$}; for squared Euclidean $d$ it admits a
location–dispersion interpretation, while for non-Euclidean $d$ (e.g., Bray–Curtis) the same
anchor-and-weighting logic applies without an exact algebraic decomposition.
Permutation-based inference uses label shuffles that respect the study design (e.g., within- or
between-subject permutations in Section \ref{sec:methods:permutationschemes}) to construct $p$-values and a null-centered effect-size summary.
Full implementation details are given in Algorithm~\ref{alg:Hajek}.

\begin{algorithm}[H]
\caption{Sample-Split Normalized Hájek Estimator}
\begin{algorithmic}[1]
\Require Data $\mathcal{D}=\{(X_i,T_i,C_i,s_i)\}_{i=1}^N$ with features $X_i\in\mathbb{R}^p$,
  treatment $T_i\in\{0,1\}$, numeric covariates $C_i$, subject $s_i$; permutations $B$.
\State Split \emph{subjects} into disjoint training and test sets (default 50/50), redrawing
  until both treatment levels occur in training; let $I_{\text{train}}, I_{\text{test}}$ be the sample indices.
\State Compute the control centroid
  \[\hat\mu_{0,\text{train}} = \frac{1}{|\{j\in I_{\text{train}}:T_j=0\}|}\sum_{j\in I_{\text{train}},T_j=0} X_j.\]
\State For $i\in I_{\text{test}}$, compute $Y_i = d(X_i,\hat\mu_{0,\text{train}})$
  (squared Euclidean or Bray--Curtis).
\State Estimate propensities: if $C$ is supplied, fit the logistic model $T\sim C$ on
  $I_{\text{train}}$ and predict $\hat e_i$ for $i\in I_{\text{test}}$; otherwise set
  $\hat e_i = \frac{1}{|I_{\text{test}}|}\sum_{j\in I_{\text{test}}}T_j$. Truncate $\hat e_i$ to $[10^{-3},\,1-10^{-3}]$.
\State Compute the H\'ajek estimate on $I_{\text{test}}$:
{\color{black}
\[
\hat\tau_{\text{obs}} =
\frac{\sum_{i\in I_{\text{test}}} T_i Y_i/\hat e_i}{\sum_{i\in I_{\text{test}}} T_i/\hat e_i}
\;-\;
\frac{\sum_{i\in I_{\text{test}}} (1-T_i) Y_i/(1-\hat e_i)}{\sum_{i\in I_{\text{test}}} (1-T_i)/(1-\hat e_i)}.
\]
}
\State Compute $\hat\tau_{\text{loc}}$ and $\hat\tau_{\text{disp}}$ on the test set (as before).
\For{$b=1$ to $B$}
  \State Permute $T$ respecting the study design (within-subject for variant, between-subject
    for invariant covariates), keeping the train/test split fixed.
  \State Repeat Steps 2--5 with the permuted labels to obtain $\hat\tau^{(b)}$.
\EndFor
\State $p = \big(1+\sum_{b=1}^B \mathbbm{1}\{|\hat\tau^{(b)}|\ge|\hat\tau_{\text{obs}}|\}\big)/(B+1)$.
\State Compute permutation $z$-score and normalized ratio:
\[
z = \frac{\hat{\tau}_{\mathrm{obs}} - \mu_\tau}{\sigma_0}, \qquad \text{ratio} = \left|\frac{\hat{\tau}_{\mathrm{obs}}}{\mu_\tau}\right|
\]
where $\mu_\tau = \frac{1}{B} \sum_{b=1}^B \hat{\tau}^{(b)}$ and $\sigma_0 = \sqrt{\frac{1}{B} \sum_{b=1}^B \left(\hat{\tau}^{(b)} - \mu_\tau\right)^2}$ are the (estimated) mean and standard deviation of the null distribution $\{\hat{\tau}^{(b)}\}_{b=1}^B$.
\end{algorithmic}
\label{alg:Hajek}
\end{algorithm}

Our proposed estimator has two key advantages; it natively supports \textit{causal} interpretation and yields an easily \textit{interpretable} effect size. For example, our estimator may output an effect size of 55.7, meaning that moving from group 0 to group 1 increases the expected squared Euclidean distance to group 0’s centroid by 55.7." This is illustrated in Figure \ref{fig:hajekDemo} in Appendix \ref{sec:app:R2equivalence}. Furthermore, the outputted effect size admits a natural decomposition into a location component and a dispersion component. When $Y_i$ is computed using Bray–Curtis distance, $\hat{\tau}_{loc}$ and $\hat{\tau}_{disp}$ remain interpretable Euclidean feature-space descriptors but do not algebraically sum to $\hat{\tau}$. Thus, they still provide a notion of location and dispersion effects even though they are not a decomposition of the original effect size.
A short proof of the location and dispersion decomposition (including with estimated propensity scores) is given in Appendix \ref{sec:app:locationDispersion}.

\subsection{Permutation schemes to address repeated measures}

\label{sec:methods:permutationschemes}
We consider two settings for the binary covariate of interest - ``invariant" and ``variant." An invariant attribute remains fixed across all repeated measures for a subject, while a variant covariate may assume different values across those measures (Figure \ref{fig:emmaPermutationSchemeDiagram}). We provide a proof of the exactness of our permutation scheme in Appendix \ref{sec:app:omnibus_validity}.

For the invariant covariate of interest, we compared the following four methods: ``unadjusted" PERMANOVA, ``Var+Subj" PERMANOVA, PERMANOVA with our proposed "between block" permutations, and our proposed Hájek estimator with ``between block" permutations. Naive PERMANOVA was implemented as the “unadjusted” permutation scheme, in which all covariate labels are permuted (shuffled) at random with no regard for subject information: $
\textbf{``Unadjusted": } \texttt{adonis2(distances$\sim$variant,...)}$.
This was implemented using a call to the \texttt{adonis} function with only the covariate of interest included, which is described in Section \ref{sec:defPERMANOVA}. Second, we ran an implementation of \text{adonis2} in which subject is a variable in the statistical test with it following the covariate of interest. Third, our subject-aware between block permutation scheme implements Algorithm \ref{alg:customInvariant} to specify the permutations and shuffled covariate labels by randomly permuting the covariate at the subject level and then assigning the new covariate variable to all time points for the subjects. Finally, we also ran our proposed Hájek effect-size estimator using the subject-aware between block permutation scheme.

\begin{algorithm}[H]
\caption{Repeated Measures PERMANOVA with Block-Wise Permutations}
\begin{algorithmic}[1]
\Require Distance object \texttt{D}; formula \texttt{D $\sim$ X$_1$ + $\dots$ + X$_k$}; metadata \texttt{data};
  blocking variable \texttt{blocks}; number of permutations $B$; optional \texttt{center\_R2}
\Ensure \texttt{adonis2} omnibus table with block-aware empirical $p$-value
\State Align \texttt{data} to the labels of \texttt{D}; if \texttt{na.rm}, drop samples with missing
  metadata in a block-aware manner, otherwise stop on any \texttt{NA}
\State Classify each covariate as \emph{invariant} (constant within every block) or \emph{variant};
  stop if both types are present
\State Store variant covariates in \texttt{permute\_within} (one row per sample) and invariant
  covariates in \texttt{block\_data} (one row per block)
\State Fit \texttt{adonis2(D $\sim$ .)} with \texttt{by = NULL} and \texttt{permutations = 0} on the observed
  metadata; record the omnibus $R^2_{\text{obs}}$
\For{$b = 1$ to $B$} \Comment{Run null permutations}
    \State Shuffle rows of \texttt{permute\_within} within blocks via \texttt{shuffle(..., how(blocks = blocks))}
    \State Shuffle rows of \texttt{block\_data} across blocks and reassign to samples by block membership
    \State Fit \texttt{adonis2(D $\sim$ .)} on the permuted metadata; record $R^{2}_{(b)}$
\EndFor
\State Compute the empirical $p$-value
\[
p = \frac{1 + \sum_{b=1}^{B} \mathbbm{1}\left\{R^2_{(b)} > R^2_{\text{obs}}\right\}}{B + 1}
\]
\If{\texttt{center\_R2}}
    \State $\tilde R^2_{\text{emp}} \gets R^2_{\text{obs}} - \frac{1}{B}\sum_{b=1}^{B} R^2_{(b)}$
\EndIf
\State \Return \texttt{adonis2} table with $R^2_{\text{obs}}$, $p$ (and $\tilde R^2_{\text{emp}}$ if requested)
\end{algorithmic}
\label{alg:customInvariant}
\end{algorithm}

For the variant covariate of interest, seven permutation schemes were compared. First, we included the “unadjusted” permutation scheme defined in Section \ref{sec:defPERMANOVA} as a benchmark. Second, our “within block” permutation scheme specifies the permutations and randomly reassign covariate values across time points within a subject, as illustrated in Figure \ref{fig:emmaPermutationSchemeDiagram}. Third, we ran our proposed Hájek effect-size estimator with constant propensity score, using the subject-aware within block permutation scheme.

To introduce the remaining methods, we first highlight the fact that PERMANOVA is fit sequentially. In a call to the \texttt{adonis2} function as \texttt{adonis2(distances $\sim$ A + B)}, the variation explained by \texttt{A} is attributed to \texttt{A}, and then the variation explained by \texttt{B} after accounting for \texttt{A} is attributed to \texttt{B}. This impacts the projection matrix, $\mathbf{H}$ used for each term, which may yield markedly different p-values than an implementation \texttt{adonis2(distances $\sim$ B + A)}. We included two additional implementations using the \texttt{adonis} function in the R \texttt{vegan} package: specifying subject as a variable in the statistical test with it preceding the covariate of interest, and specifying subject as a variable in the statistical test with it following the covariate of interest:
$\textbf{``$\sim$Subj+Var": } \texttt{adonis2(distances$\sim$subject+variant,...)}\\
\textbf{``$\sim$Var+Subj": } 
\texttt{adonis2(distances$\sim$variant+subject,...)}.$

Finally, we implemented two \texttt{vegan} developer-suggested techniques to handle confounding with the \texttt{permute} R package \citep{vegan_discussion600, vegan_discussion604, permute}. The first solution specifies blocking by subject (samples within the same subject are not permuted across subjects): 

$
    \textbf{``G1": } \texttt{h1 <- with(data, how(..., blocks = subject))}\\
\texttt{adonis2(distance $\sim$ variant, data, permutations = h1)}$

The second solution applies redundancy analysis with a conditional term for subject ID in order to partial out the variation due to subject:

$
\textbf{``G2": }\texttt{ord<-rda(response\_df$\sim$variant + Condition(subject), data)}\\
    \texttt{anova(ord, permutations = h1)}\nonumber.
$

This second approach first projects out variation attributable to individual subjects using the \texttt{Condition} term, thereby isolating the effect of treatment within subjects. This is combined with subject-level blocking in the permutation scheme. It is also of note that, according to the developers of \texttt{vegan}, \texttt{strata} is a legacy argument of \texttt{adonis2} and should be superseded by the \texttt{permute} package to specify permutations for repeated measures \citep{vegan_discussion600}.
Within iterations of the simulation, each of the four (invariant) and seven (variant) correction methods used the same random seed to improve comparability.

\subsection{Numerical Experiments}

\subsubsection{Data}
\label{sec:methods:data}
We used publicly available stool metagenomes from the Integrative Human
Microbiome Project (HMP2; \citealp{integrative2019integrative}) Inflammatory Bowel
Disease Multi'omics Database \citep{lloyd2019multi}, a longitudinal study in which participants with Crohn's
disease, ulcerative colitis, and non-IBD controls were sampled repeatedly over
approximately one year. Its repeated-measures structure, with many samples per
subject, makes it a natural testbed for block-aware permutation methods.
Species-level relative abundances were extracted from the MetaPhlAn taxonomic
profile table, transposed to a samples-by-species matrix, cleaned to harmonize
sample identifiers, converted to numeric format, and filtered to remove samples
or taxa with all-zero abundance. To reduce computational burden, we randomly
selected 250 samples and retained the 200 most prevalent species, ranked by the
number of samples in which each species was present (with mean abundance as a
tiebreaker). Metadata were reduced to subject and sample identifiers,
deduplicated, cleaned to match the taxonomic profiles, and aligned one-to-one
with the filtered abundance matrix. These real abundance profiles and subject
memberships were retained in all simulations, with covariates of interest and
effects generated synthetically as described below.
Each simulation setting was replicated 500 times.

The same analysis was also conducted on nasal metagenomes from a longitudinal
mother--infant cohort study of \textit{Staphylococcus aureus} carriage
(208 samples from 36 mothers and 36 infants, sampled monthly over the first year
of life; \citealp{accorsi2020determinants}), using daycare status and delivery mode for the variant and invariant blocks, respectively.
Results are in Appendix~\ref{sec:app:staph}.




\subsubsection{Generation of ``variant" and ``invariant" covariates of interest}

\label{sec:generationOfInvVar}
To simulate an invariant covariate of interest that was completely linked with subject (i.e., sex, delivery mode), we randomly generated a Bernoulli variable per subject and assigned the variable to all samples from that subject: $
Z_i \sim \text{Bernoulli}(p),\, \text{for } i = 1, \dots, N \\
X_{ik} = Z_i, \, \text{for } k = 1, \dots, g$.

Here, \( Z_i \) is a subject-level covariate, and \( X_{ij} \) is the covariate value assigned to sample \( k \) of subject \( i \). This defines a covariate that is invariant within each subject. The Bernoulli probability parameter $p$ was varied across simulations, taking values $0.1, 0.25,$ and $0.5$.

To simulate a variant covariate of interest that was partially linked with subject (i.e., daycare attendance or breastfeeding in the preceding month), we randomly assigned subjects to be “type 1” or “type 2” with probability $0.5$: $
T_i \sim \text{Bernoulli}(0.5), \quad \text{for } i = 1, \dots, N,$
where \( T_i = 1 \) denotes a ``type 1'' subject and \( T_i = 0 \) denotes a ``type 2'' subject.
Next, we randomly generated a number of binary values for the “variant” covariate of interest equal to the number of time points for the subject with probability $p$ for “type 1” individuals and probability $1-p$ for “type 2” individuals:
\[
X_{ik} \sim \text{Bernoulli}(q_i), \quad \text{where } 
q_i = 
\begin{cases}
p, & \text{if } T_i = 1 \ (\text{type 1}) \\
1 - p, & \text{if } T_i = 0 \ (\text{type 2}),
\end{cases}
\]

for $i=1,..,N$, $k = 1,...,g$. The Bernoulli probability parameter $p$ was varied across simulations, taking values $0.1, 0.25$, and $0.5$. In the case of $p = 0.5$, the covariate of interest is independent of subject type, and thus we expect zero effect under the unadjusted setup.
Figure \ref{fig:emmaPermutationSchemeDiagram} provides a unifying diagram illustrating the permutation schemes as well as difference between invariant and variant covariates. 

\begin{figure}
    \centering
    \includegraphics[width=0.9\linewidth]{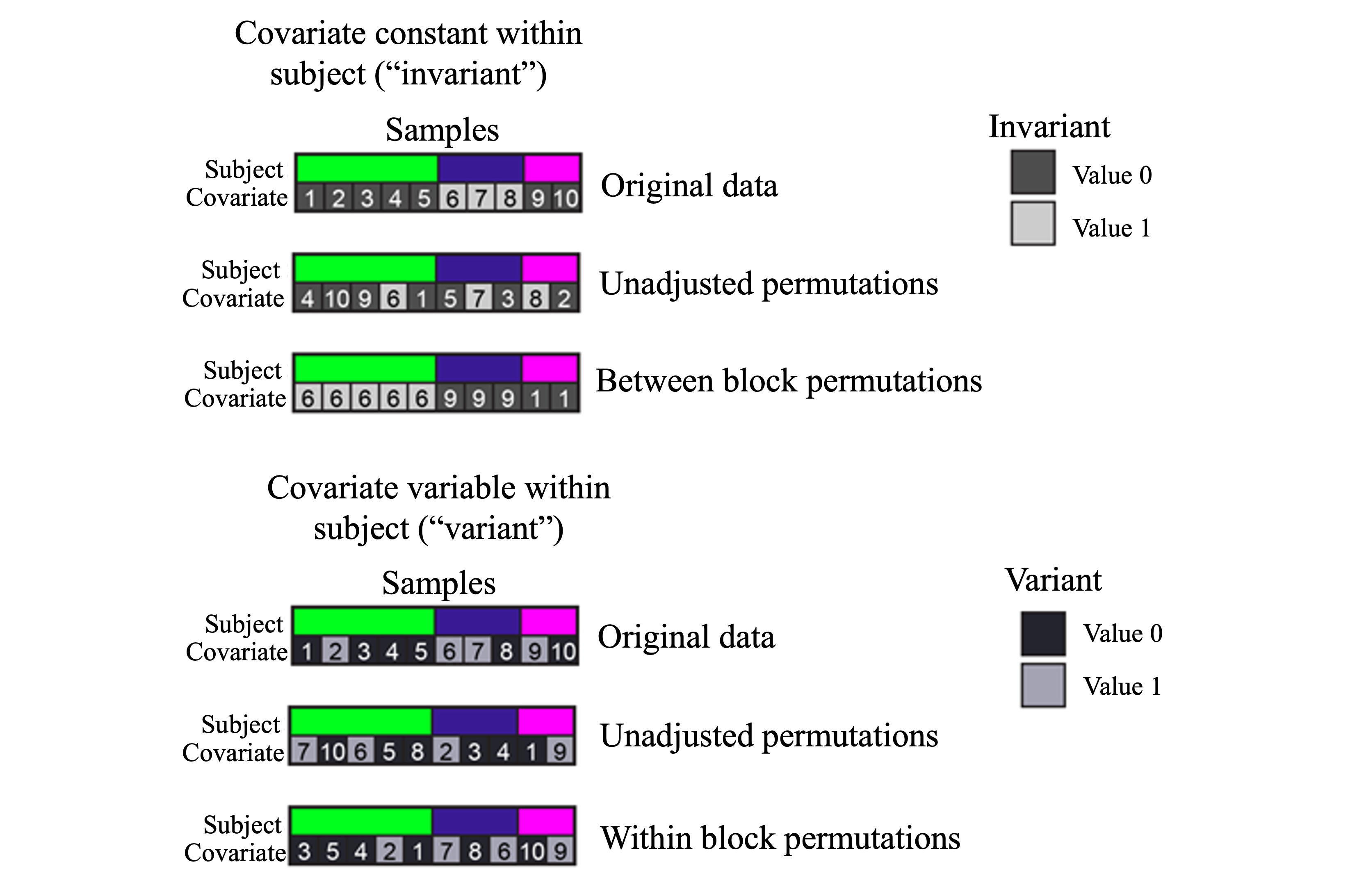}
    \caption{Permutation schemes proposed for confounding correction}
\label{fig:emmaPermutationSchemeDiagram}
\end{figure}

\subsubsection{Signal Strength Manipulation and Explanation of Methods}

In the invariant setting, in order to vary the signal strength in this simulation, we implemented a k-means clustering-based method. The centroid of each subject was used as input into a k-means clustering algorithm with $2$ centers (and $50$ initializations). Subjects in the group with the larger number of observations were assigned $T_i = 0$. Effect size was modulated by ``flipping" each subject with probability $p\in \{0,0.05,0.1,0.5\}.$ This probability is denoted as ``non-independence measure," and it should be noted that probability $0.5$ corresponds to the null signal case under which p-values should have a Unif$(0,1)$ distribution.

Under the variant setting, $10\%$ of the microbial taxa (chosen uniformly at random) received a multiplicative spike in relative abundance among treated samples, i.e., for each selected taxon we multiplied its relative abundance in subjects with $T_i = 1$ by a factor of $(1 + \text{effect\_size})$. Following this, subject-level confounding was introduced by adding an independent taxon-specific shift drawn from $\mathrm{Unif}(0, 0.1)$ to all samples within each subject, followed by renormalization of each sample's composition to sum to $1$.
Throughout this work, Bray-Curtis dissimilarity was used across PERMANOVA implementations.

\subsection*{Software}

All methods are implemented in the open-source \texttt{permabrioche} R package, available as part of the \texttt{bioBakery} ecosystem (GitHub: \texttt{biobakery/permabrioche}). 

The core user-facing functions are:
\begin{enumerate}
  \item \textbf{\texttt{PERMANOVA\_repeat\_measures()}} \\
  Performs PERMANOVA while respecting repeated-measures designs. Users supply a distance object and a model formula (e.g.\ \texttt{D \textasciitilde{} X1 + X2 + ...}), along with a blocking variable identifying groups. The function automatically separates within-block and block-level covariates, constructs between and within-block permutations, and returns a standard \texttt{adonis2} table with empirical p-values based on the customized null.

  \item \textbf{\texttt{hajek\_repeat\_measures()}} \\
  Implements the sample-split normalized H\'ajek estimator on distances (Algorithm \ref{alg:Hajek}) to a control centroid for a binary covariate of interest. The function takes a distance-style formula, sample-level metadata, a feature matrix (e.g.\ taxa abundances), a blocking variable, and the name of the binary treatment covariate. Optional arguments allow the user to choose the distance metric for the outcome (\texttt{"bray"} or \texttt{"euclidean"}), specify the train/test split ratio, and, when desired, provide numeric covariates for propensity-score estimation.

  \item \textbf{\texttt{hajek\_repeat\_measures\_loc\_and\_disp()}} \\
  Computes the location and dispersion components of the H\'ajek effect size under a subject-level split, using the same input structure as \texttt{hajek\_repeat\_measures()}. This function returns the total effect on the distance scale together with its decomposition into a centroid-shift (location) term and a within-group dispersion term.
\end{enumerate}

\section{Results}\label{sec:results}

In order to characterize and improve PERMANOVA effect-size measures and significance calculations, we carried out two families of experiments. In the first, we evaluated effect-size measures in isolation (unadjusted $R^2$,
the parametric null-centered $\tilde{R}^2$, an empirically null-centered $R^2$
obtained from the permutation null, and the H\'ajek ATE estimator) using original and shuffled data drawn from the Human Microbiome Project 2 Inflammatory Bowel Disease Multi-omics Database population \citep{lloyd2019multi}. We subsequently assessed p-value calculations in data with sample-level non-independence and evaluated permutation schemes for block-invariant covariates (such as subject or location across a time course) and block-variant covariates (e.g. diet or temperature).

\subsection{PERMANOVA-based effect sizes are inaccurately reported by unadjusted R2}

Before addressing non-independence, we first evaluated candidate effect-size
measures in a setting with strictly independent samples, retaining a single
sample per HMP2 subject and drawing $N \in \{20, 40, 80\}$ samples without
replacement in each of 500 iterations. Samples were randomly assigned to
$g \in \{2, 5\}$ balanced groups, so that the null hypothesis of no group
difference held exactly. We compared four measures: the unadjusted $R^2$
reported by \texttt{adonis2}; the parametric null-centered
$\tilde{R}^2 = R^2 - (g-1)/(N-1)$ (Section~\ref{sec:methods:null-centeredR2}); an
empirically null-centered $R^2$, obtained by subtracting the mean $R^2$ across
permutations; and the H\'ajek ATE estimator (Algorithm~\ref{alg:Hajek}), which
applies only to binary covariates ($g = 2$).

\begin{figure}
    \centering
    \includegraphics[width=\linewidth]{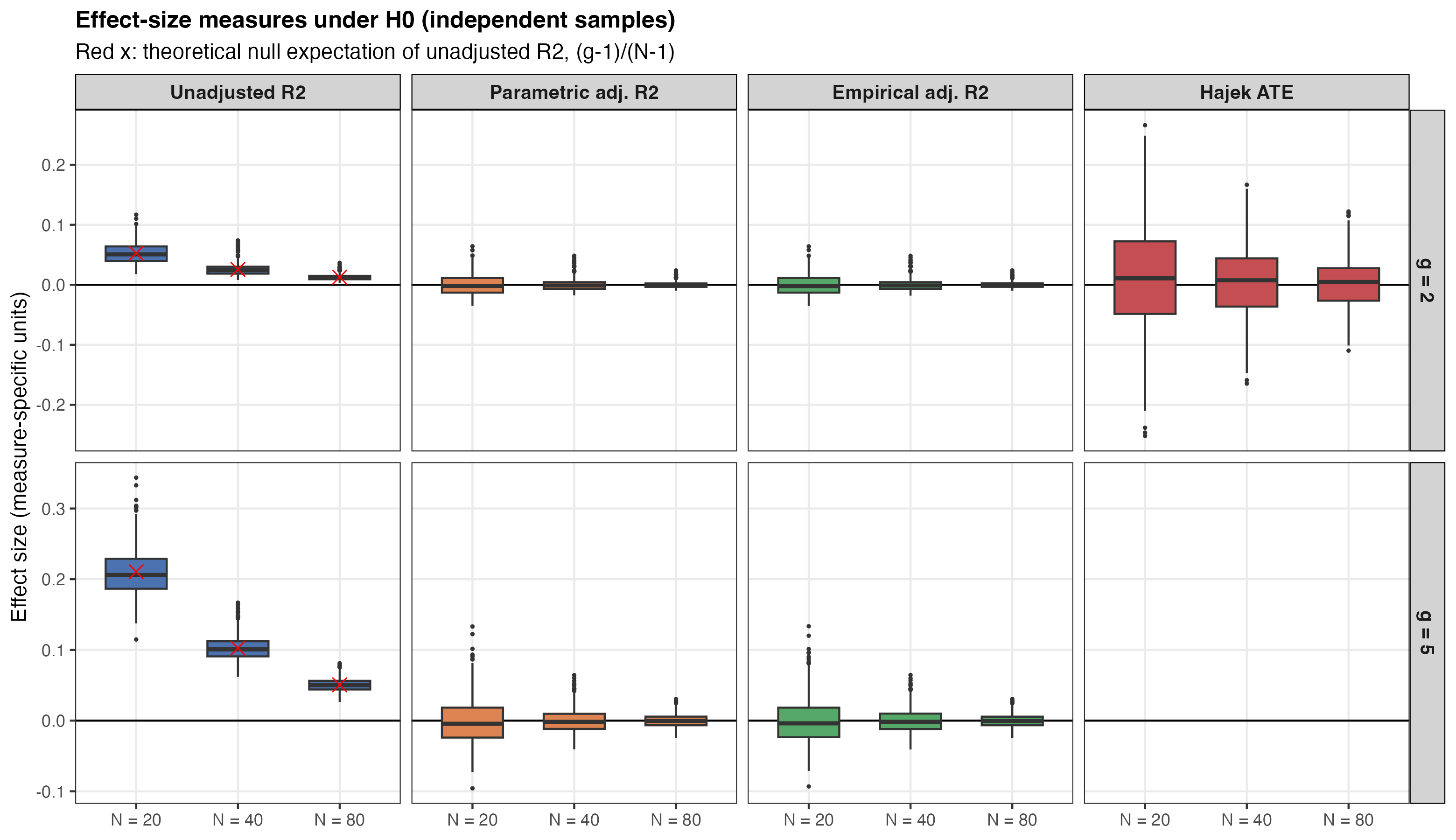}
    \caption{Distributions of four effect-size measures under the null hypothesis
with independent samples (HMP2, one sample per subject, Bray--Curtis
dissimilarity, 500 iterations per setting). Rows show the number of groups
$g$; columns show the effect-size measure. Red crosses mark the theoretical
null expectation of unadjusted $R^2$, $(g-1)/(N-1)$; the horizontal line marks
zero. The H\'ajek estimator is defined only for binary covariates and is
therefore shown for $g = 2$ only.}
\label{fig:effsizeNull}
\end{figure}

Under the null, unadjusted $R^2$ was consistently positive, and its median
closely matched the theoretical expectation $(g-1)/(N-1)$ in every setting
(Fig.~\ref{fig:effsizeNull}). The inflation was largest when the number of
groups was large relative to the sample size: with $g = 5$ and $N = 20$, the
median $R^2$ was approximately 0.21, implying that one fifth of the
community variation was ``explained'' by randomly assigned labels. This
inflation shrank as $N$ increased but remained clearly nonzero at $N = 80$
($\approx 0.05$ for $g = 5$). Notably, the Euclidean-derived expectation
$(g-1)/(N-1)$ also described the null behavior of $R^2$ under Bray--Curtis
dissimilarity. By contrast, the parametric and empirical null-centered
measures were both centered at zero across all sample sizes and numbers of
groups, with variability that decreased with $N$, and the H\'ajek estimator
was likewise centered near zero. The H\'ajek estimates were more variable than
the null-centered $R^2$ measures, reflecting both its different scale (units of
Bray--Curtis dissimilarity rather than a proportion of variance) and the
sample split, which uses only half of the data to estimate the effect.

Note that the choice among the three $R^2$-based measures affects only the
reported magnitude, not the p-value. Unadjusted
$R^2$, the parametric $\tilde{R}^2$, and the pseudo-$F$ statistic are monotone
functions of one another, and the empirical $\tilde{R}^2$ is evaluated against
the same permutation null as $R^2$. Null-centering thus corrects the interpretation of effect
magnitude without changing which associations are declared significant. The
H\'ajek estimator, in contrast, is a distinct statistic with its own permutation
null and p-value. Because half of the samples are reserved for estimating the
control centroid, it is expected to be somewhat less powerful than PERMANOVA at
a given sample size, a trade-off for an
effect size with a direct interpretation on the distance scale. We further show how the H\'ajek
effect size further decomposes into location and dispersion components
(Appendix Figure~\ref{fig:hajekDecomp}) using the nasal data introduced earlier.

\subsection{PERMANOVA p-values are anti-conservative for block-invariant covariates}

In order to compute accurate p-values when applying PERMANOVA to non-independent sample sets (e.g. those with block structure or drawn longitudinally from the same source), we consider both block-invariant and block-variable covariates. Block-invariant covariates remain the same for each sample in a non-independent set, such as a subject-level metadatum (e.g. sex) across a series of repeated measures. We evaluated the ability of different PERMANOVA variants to capture these block-invariant effects using 1) a Kolmogorov-Smirnov (K-S) Test \citep{kolmogorov1933} of empirical p-value deviation from uniform over $[0,1]$ and 2) the corresponding false positive rate (FPR).

For the block-invariant covariate of interest, the p-value distributions from unadjusted PERMANOVA and ``Var+Subj" PERMANOVA adjustment (see Methods) yielded an excess of false positives under the null signal case (non-independence measure = $0.5$), while the p-value distribution from between-block-permuted PERMANOVA and within-block-permuted Hájek estimator correctly recovered the uniform distribution under the null, while remaining sensitive to true effects across all signal strengths. It is of note that our causal estimator is slightly under-powered compared to the other methods, perhaps because of the sample-splitting scheme through which half of the data is used for centroid estimation and the remaining data is used for effect size estimation (Fig. \ref{fig:hmp2_invariantPval}). Furthermore, Fig. \ref{fig:hmp2_invariantVolcano} in Appendix \ref{sec:app:hmp2_volcanos} illustrates how the p-values vary with respect to effect size.

Specifically, the Hájek estimator and Between Block PERMANOVA yielded statistics and p-values of $(0.036, 0.536)$ and $(0.036, 0.536)$, respectively, providing evidence that these methods mitigate confounding bias in the invariant case. At $\alpha = 0.05,$ the FPR of the two permutation-aware methods was less than $6\%,$ while the alternative methods both showed FPRs over $70\%$. Thus, we determine that our permutation scheme is uniquely positioned to provide valid confounding adjustment for block-invariant covariates (Table \ref{tab:hmp2_invariantTab}).

\begin{table}[!h]

\caption{Kolmogorov--Smirnov test for Each Method Block-Invariant Setting}
\centering
\begin{tabular}[t]{c|ccc}
\hline
Method & K-S Statistic & K-S p-value & FPR\\
\hline
Hájek & \textbf{0.036} & \textbf{0.536} & 0.044 \\
Between Block PERMANOVA & \textbf{0.036} & \textbf{0.536} & \textbf{0.054}\\
Unadj. PERMANOVA & 0.768 & 0.000 & 0.738 \\
Var+Subj PERMANOVA & 0.988 & 0.000 & 1.00 \\
\hline
\end{tabular}
\label{tab:hmp2_invariantTab}
\end{table}


\begin{figure*}[h!]
    \centering
\includegraphics[width=\textwidth]{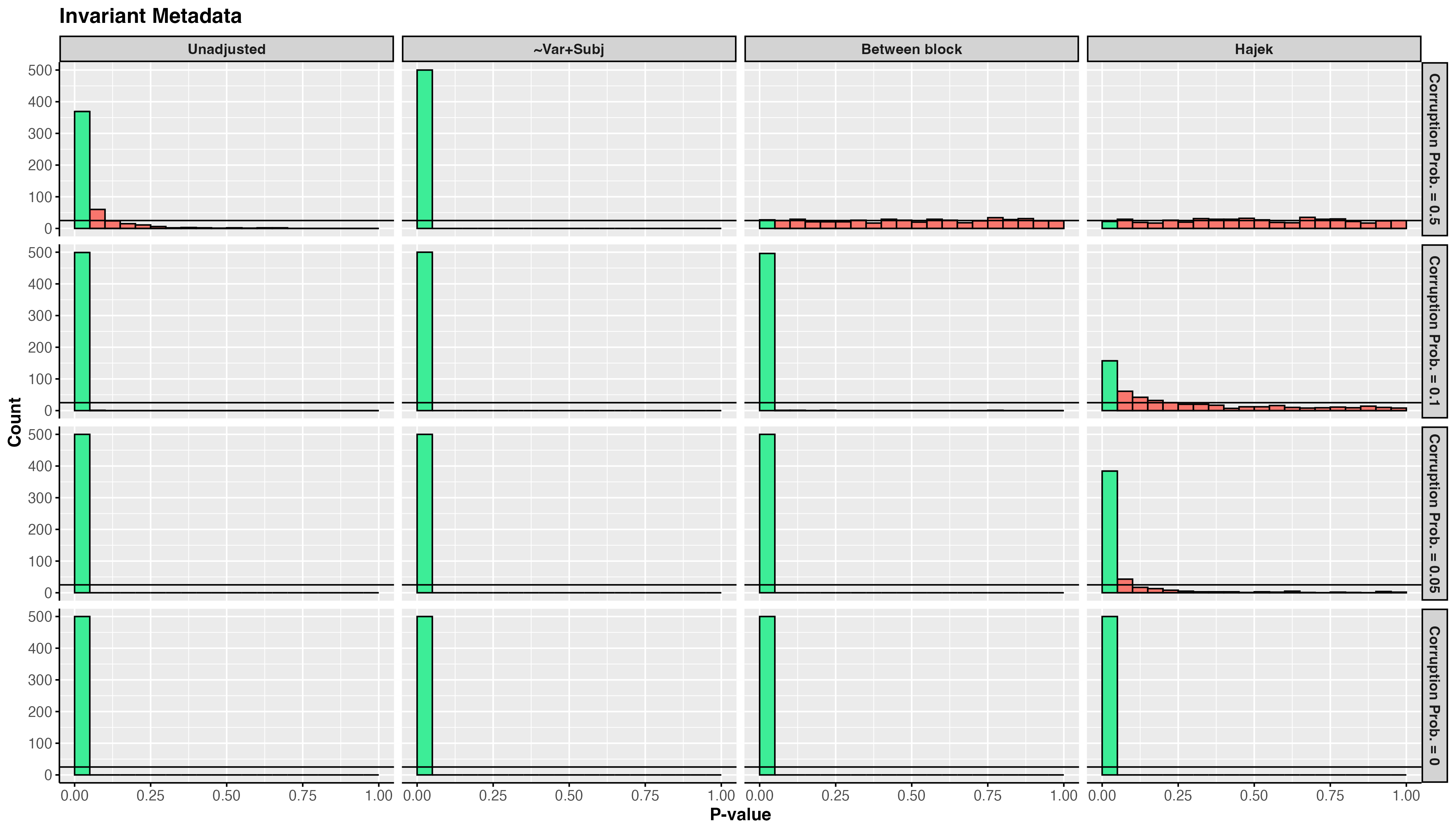}
    \caption{Empirical distribution of p-values for invariant metadata}
    \label{fig:hmp2_invariantPval}
\end{figure*}

\subsection{Permutation schemes must capture block-variant covariates for accurate p-values}

Sample non-independence can also affect the calculation of p-values for block-variant covariates during PERMANOVA, i.e. those that are sample- rather than block-specific; for example, water column depth as measured at each of several sites. For such variant covariates, the empirical distribution of p-values from unadjusted PERMANOVA, $\sim$Subj+Var PERMANOVA, and $\sim$Var+Subj PERMANOVA still yield an excess of false positives under the null signal case (Effect size = $0$). P-values from the remaining four methods - which include the adonis2 permutation implementations (G1 and G2) - correctly recover a uniform distribution under the null while remaining sensitive to true effects across all signal strengths (Fig. \ref{fig:hmp2_variantPval}). Furthermore, Fig. \ref{fig:hmp2_variantVolcano} in Appendix \ref{sec:app:hmp2_volcanos} illustrates how the p-values vary with respect to effect size.

In the block-variant case, these tests demonstrate a stark contrast between the four ``non-independence-aware" methods and the three ``naive" PERMANOVA methods, with the latter all yielding p-value $\approx 0.$ Furthermore, the FPR of $\sim$Var+Subj PERMANOVA and unadjusted PERMANOVA demonstrate these two methods' inadmissibility in the variant covariate setting, with both yielding rates above $30\%.$ On the other hand, the FPR of the four aforementioned methods are upper bounded by $6\%.$ It is clear from these experiments that failure to properly account for confounding in the variant covariate case can yield severely distorted hypothesis tests for effect size testing in PERMANOVA-type problems. While this is methodologically apparent and, as shown here, addressable via
several approaches, it has sometimes remained overlooked in practice
\citep{kleine2021data,zhu2021constraining} (Table~\ref{tab:hmp2_variantTab}).

\begin{table}[!h]

\caption{Kolmogorov--Smirnov Statistics for Each Method Block-Variant Setting}
\centering
\begin{tabular}[t]{c|cccc}
\hline
Method & K-S Statistic & K-S p-value & FPR\\
\hline
Hájek & \textbf{0.029} & \textbf{0.794} & \textbf{0.056}\\
Within Block PERMANOVA & 0.052 & 0.134 & 0.042 \\
G1 PERMANOVA & 0.052 & 0.134 & 0.042 \\
G2 PERMANOVA & 0.054 & 0.108 & 0.036 \\
$\sim$ Subj+Var PERMANOVA & 0.062 & 0.043 & 0.042 \\
$\sim$ Var+Subj PERMANOVA & 0.966 & 0.000 & 0.996 \\
Unadj. PERMANOVA & 0.398 & 0.000 & 0.304 \\
\hline
\end{tabular}
\label{tab:hmp2_variantTab}
\end{table}

\begin{figure*}[h!]
    \centering
\includegraphics[width=\textwidth]{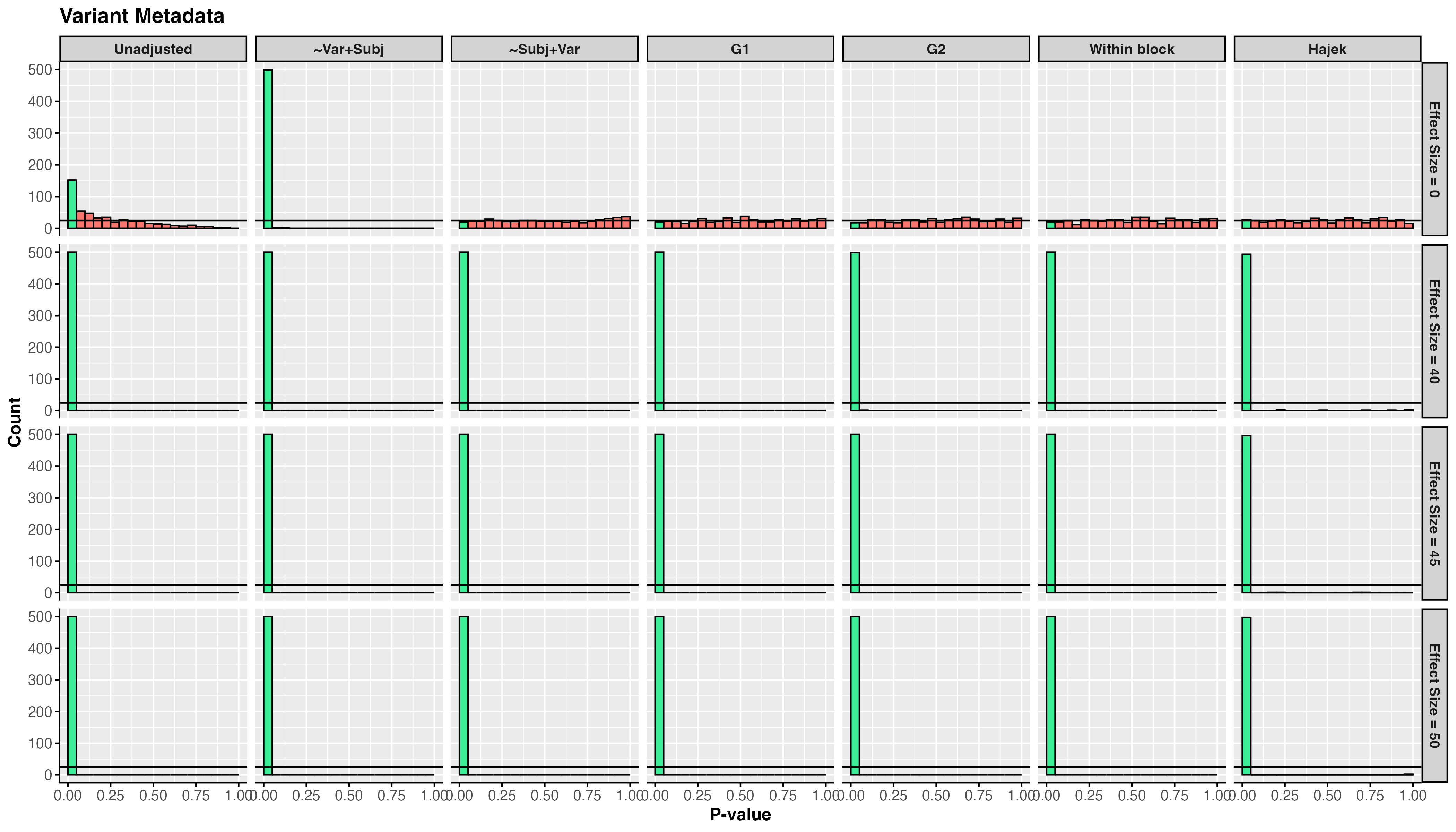}
    \caption{Empirical distribution of p-values for variant metadata}
    \label{fig:hmp2_variantPval}
\end{figure*}

\section{Discussion}\label{sec:discussion}

In this work, we revisited the foundations of effect size estimation for PERMANOVA, proposed a causally-grounded, distance-based alternative, and introduced a permutation scheme under non-independent samples compatible with both PERMANOVA and our new estimator. Motivated by widespread use of PERMANOVA in microbiome and other high-dimensional applications, we first demonstrated that the conventional $R^2$-based measure of explained variation is systematically inflated under the null, particularly when the number of levels in the covariate of interest is large relative to the sample size. In addition to evaluating adjustment methods to account for this, we also constructed an estimator based on the Hájek average treatment effect, defined on sample-wise distances to an estimated control centroid, that (i) admits a causal interpretation under standard identification assumptions and (ii) yields an effect size expressed in the same units as the chosen dissimilarity metric. Additionally, we demonstrated the anti-conservative influence of non-independence among samples during p-value calculation (e.g. time courses or other repeated measures) and developed a permutation scheme that addresses it for both invariant and variant covariates of interest. This generalizes the existing approach for block-variant covariates, and it is exact for single covariates or multiple covariates of the same type simultaneously. All of these features were verified in settings with diverse types and amounts of subject-level confounding, allowing the methods and their implementation to be applied broadly across future studies.


Based on our findings, we offer the following recommendations for practitioners using PERMANOVA or related omnibus tests in high-dimensional settings:

\begin{enumerate}[label=(\roman*)]

    \item \emph{Unadjusted $R^2$ is unreliable as a PERMANOVA effect size in small or many-group designs}. It can be replaced by 1) standard parametric $R^2$ adjustment under Euclidean distance; 2) empirically adjusted $R^2$ (generalizable, at the cost of somewhat increased compute); or 3) the H\'ajek estimator introduced here, which sacrifices power but is expressed in units of the original distance measure.

    \item \emph{PERMANOVA permutations must be adjusted in the presence of non-independence among samples}. This most frequently occurs in blocks, e.g. due to repeated measures. Block-variant covariates can be adjusted by methods that employ within-block permutations, whereas block-invariant covariates must employ between-block permutations.


    \item \emph{Report implementation details.} Methods sections should include the full model formula, the distance metric, the permutation strategy (including any blocking or conditioning), and the definition of the reported effect size. This level of transparency is necessary to assess potential confounding and interpret effect magnitude across studies.
\end{enumerate}


Several potential extensions of these methods could be developed in future work. Our Hájek estimator, for example, currently handles only one binary covariate,
and it measures only the average change in distance from the control group's
centroid. Extensions to multi-level or continuous covariates, multivariate contrasts, and interaction effects would broaden its applicability. 
Second, because we employ sample-splitting,
results vary across splits and only half the subjects inform the effect
estimate. Furthermore, our logistic propensity model can be misspecified, which can bias the estimate.
More advanced cross-fitting or doubly robust approaches could further stabilize the estimator while maintaining valid inference under weaker assumptions.

Additionally, although we considered block-nonindependence in repeated measures, many real-world analyses involve multiple, potentially overlapping sources of non-independence (e.g., batch, study, household, or geographic region). Systematically characterizing the behavior of distance-based effect size estimators under such complex designs, and developing unified permutation frameworks for them, remains an open challenge. In a similar vein, extension of this framework to accommodate covariates of multiple types (invariant and variant) simultaneously using Freedman-Lane permutations would again broaden its utility. Finally, while we focused on PERMANOVA, related distance-based methods such as MiRKAT \citep{zhao2015testing, koh2019distance} and kernel-based association tests may admit analogous effect size constructions; exploring these connections is an important direction for future work.


In summary, we have shown that the conventional $R^2$-based effect size used with PERMANOVA is inherently inflated under the null in many realistic designs and can be difficult to interpret. We introduced both a Hájek-based causal estimator that quantifies the effect of a binary covariate on distances to a control centroid, which yields an effect size with a clear interpretation in the units of the chosen distance metric, and a confounding-aware permutation scheme for PERMANOVA-type methods. This approach provides a principled and transparent alternative for quantifying and testing multivariate effects in longitudinal and other complex high-dimensional studies.

\section{Conflicts of interest}
The authors declare that they have no competing interests.


\bibliographystyle{oup-abbrvnat}
\bibliography{reference}

\section{Funding}
This work is supported in part by the National Institutes of Health under grant number T32HD113301: Artificial Intelligence and Precision Nutrition Training Program, Cornell University. 

\section{Data availability}
The data underlying the main body of this article are publicly available from the Inflammatory Bowel Disease Multi'omics Database (IBDMDB), generated as part of the Integrative Human Microbiome Project \citep{lloyd2019multi}, at \url{https://ibdmdb.org}. We used the merged species-level taxonomic profiles from shotgun metagenomic sequencing (\texttt{taxonomic\_profiles\_3.tsv.gz}, MGX data products release 2018-05-04) and the associated sample metadata (\texttt{hmp2\_metadata\_2018-08-20.csv}). 
Raw sequencing reads are available from the NCBI Sequence Read Archive under BioProject accession PRJNA398089.

Processed nasal taxonomic profiles and metadata were provided by the authors of
\citet{accorsi2020determinants}; raw reads are available from NCBI SRA (BioProject PRJNA610982).

\section{Author contributions statement}

S.Man. conceived and executed the experiments, developed the Hájek estimator, and drafted the manuscript. E.A. contributed to the design of the experiments and development of the permutation scheme. J.N. provided scientific guidance throughout and revised the manuscript. S.Ma provided key conceptual input that helped refine the Hájek estimator throughout its development and revised the manuscript. E.F. provided ongoing scientific guidance, contributed to each part of the methodological development throughout the project, and revised the manuscript. C.H. oversaw the study and provided overall direction, interpretive guidance, methodological guidance, and manuscript revision prior to submission.

\section{Acknowledgments}
 This work was supported in part by funds from the National Institutes of Health [grant number T32HD113301]: Artificial Intelligence and Precision Nutrition Training Program, Cornell University. 

\section{Appendix}
\label{sec:Appendix}

\subsection{$R^2$ equivalence proof and auxiliary figures}
\label{sec:app:R2equivalence}
\begin{proof}
\begin{align*}
&R^2_{\text{PERMANOVA}}
  = \frac{SS_A}{SS_T}\\[0.3em]
  &= \frac{\left(\sum_{i<j} d_{ij}^2/N\right)
          - \left(\sum_{l=1}^g \sum_{i<j} \delta_{ij}^{[l]} d_{ij}^2/n_l\right)}
          {\sum_{i<j} d_{ij}^2/N}\\[0.3em]
  &= \frac{\left(\sum_{i<j} \sum_{v} (Y_{iv} - Y_{jv})^2/N\right)
          - \left(\sum_{l} \sum_{i<j} \delta_{ij}^{[l]}
                  \sum_{v} (Y_{iv} - Y_{jv})^2/n_l\right)}
          {\sum_{i<j}\sum_{v=1}^p (Y_{iv} - Y_{jv})^2/N}\\[0.3em]
  &= \frac{\left(\sum_{v=1}^p \sum_{i=1}^N (Y_{iv} - \overline{Y}_v)^2\right)
          - \left(\sum_{v=1}^p \sum_{l=1}^g \sum_{i=1}^N
                    (Y_{iv} - \overline{Y}_{lv})^2\right)}
          {\sum_{v=1}^p \sum_{i=1}^N (Y_{iv} - \overline{Y}_v)^2}\\[0.3em]
  &= \frac{\sum_{v=1}^p \left(\sum_{i=1}^N (Y_{iv} - \overline{Y}_v)^2
          - \sum_{l=1}^g \sum_{i=1}^N (Y_{iv} - \overline{Y}_{lv})^2\right)}
          {\sum_{v=1}^p \sum_{i=1}^N (Y_{iv} - \overline{Y}_v)^2}\\[0.3em]
  &= \frac{\sum_{v=1}^p \left(\sum_{i=1}^N (Y_{iv} - \overline{Y}_v)^2
          - \sum_{l=1}^g \sum_{i=1}^N (Y_{iv} - \hat{Y}_{lv})^2\right)}
          {\sum_{v=1}^p \sum_{i=1}^N (Y_{iv} - \overline{Y}_v)^2}\\[0.3em]
  &= \frac{SS_{\text{total}} - SS_{\text{res}}}{SS_{\text{total}}}\\
  &= R^2_{\text{LM}}
\end{align*}

where $\hat{Y}$ is the predicted output from a linear model (LM).
\end{proof}

\begin{figure}[h!]
    \centering
\includegraphics[width=\linewidth]{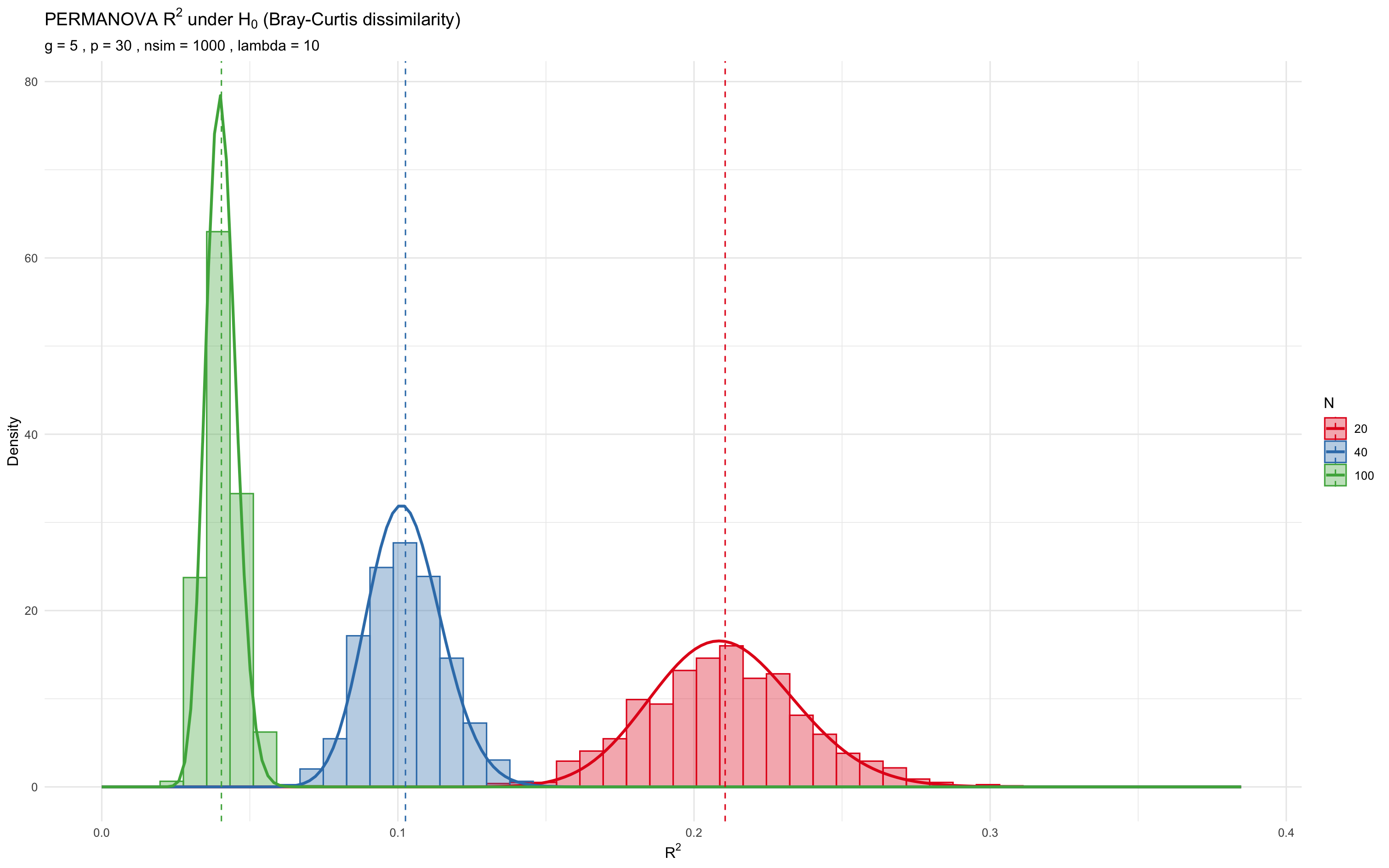}
        \caption{Simulated PERMANOVA $R^2$ distribution under $H_0$, with overlaid $\text{Beta}$ distribution and mean. Red, blue, and green correspond to $N=20$, $N=40$, and $N=100$, respectively. (Bray-Curtis dissimilarity)
}
        \label{fig:betaOverlayBray}
\end{figure}

\begin{figure}
    \centering
    \includegraphics[width=0.7\linewidth]{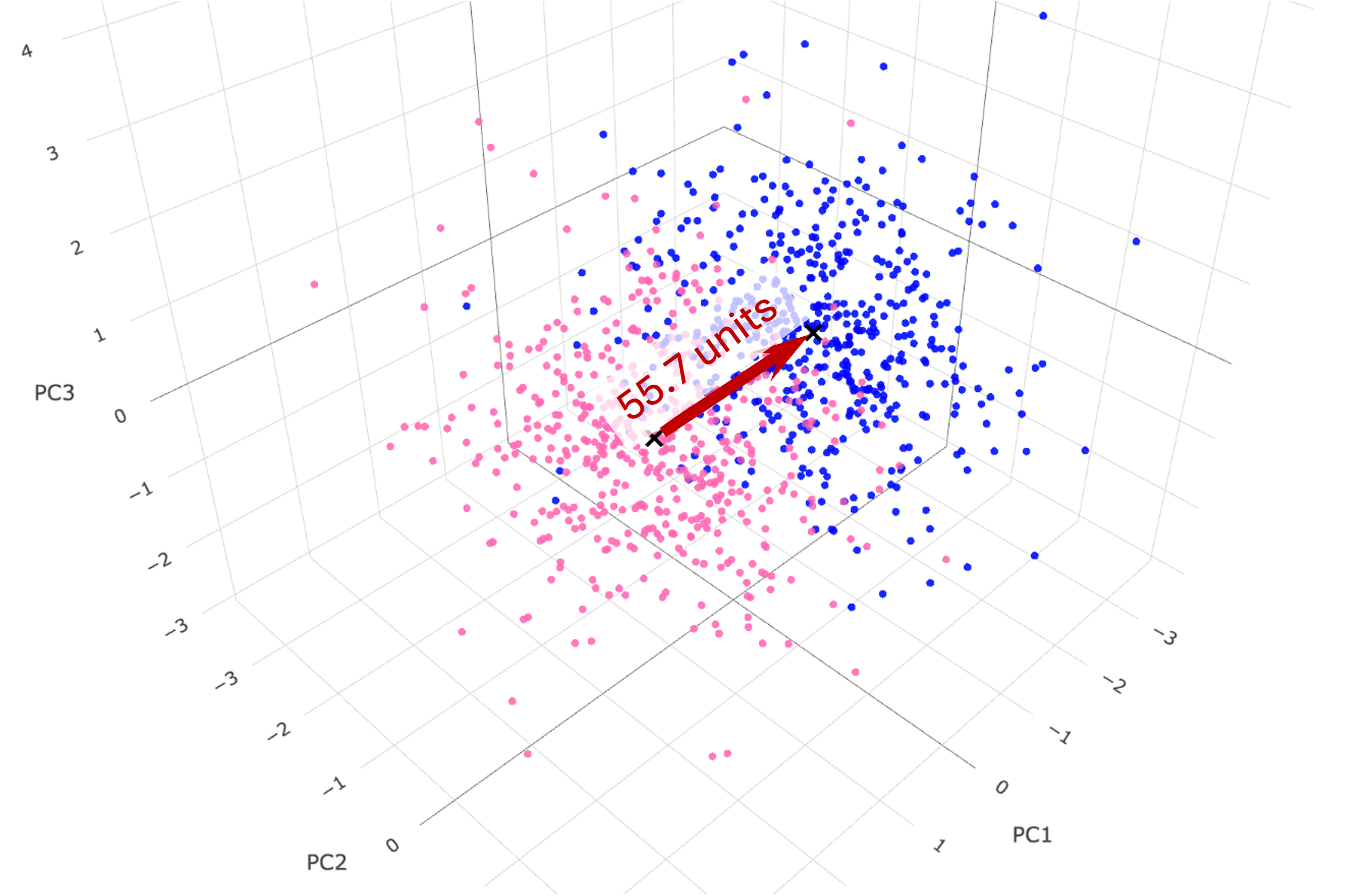}
    \caption{Illustration of Hájek effect size measure}
    \label{fig:hajekDemo}
\end{figure}

\begin{figure}[h]
    \centering
\includegraphics[width=\linewidth]{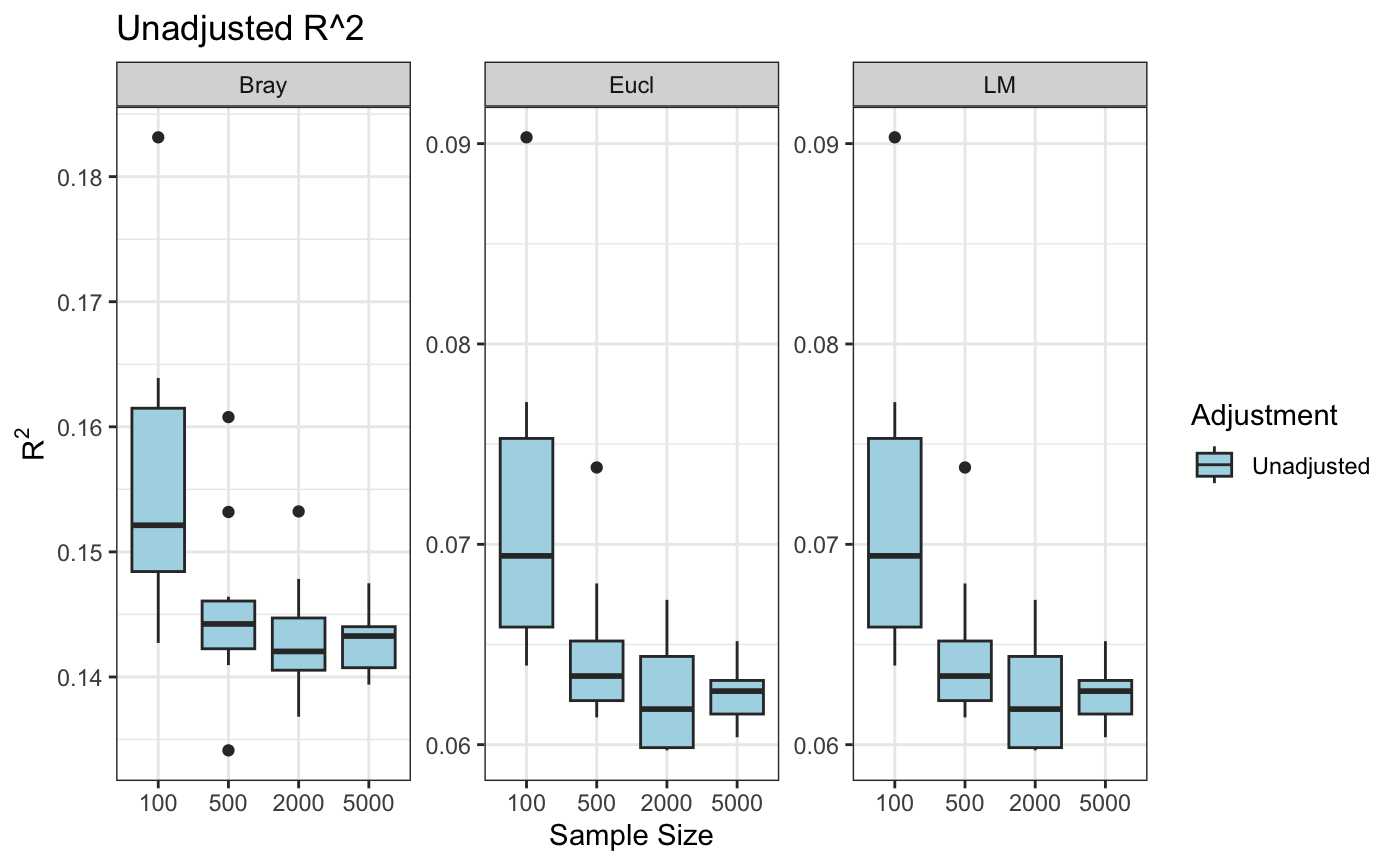}
            \caption{
        Unadjusted $R^2$ values across sample sizes ($n$). 
        Each panel represents a distinct analysis method: from left to right, 
        PERMANOVA with Bray-Curtis dissimilarity, PERMANOVA with Euclidean distance, and linear model (LM). 
        Boxplots display the distribution of $R^2$ values across replicates for each sample size and method.
    }
        \label{fig:sub1}
\end{figure}
\begin{figure}[h]
\includegraphics[width=\linewidth]{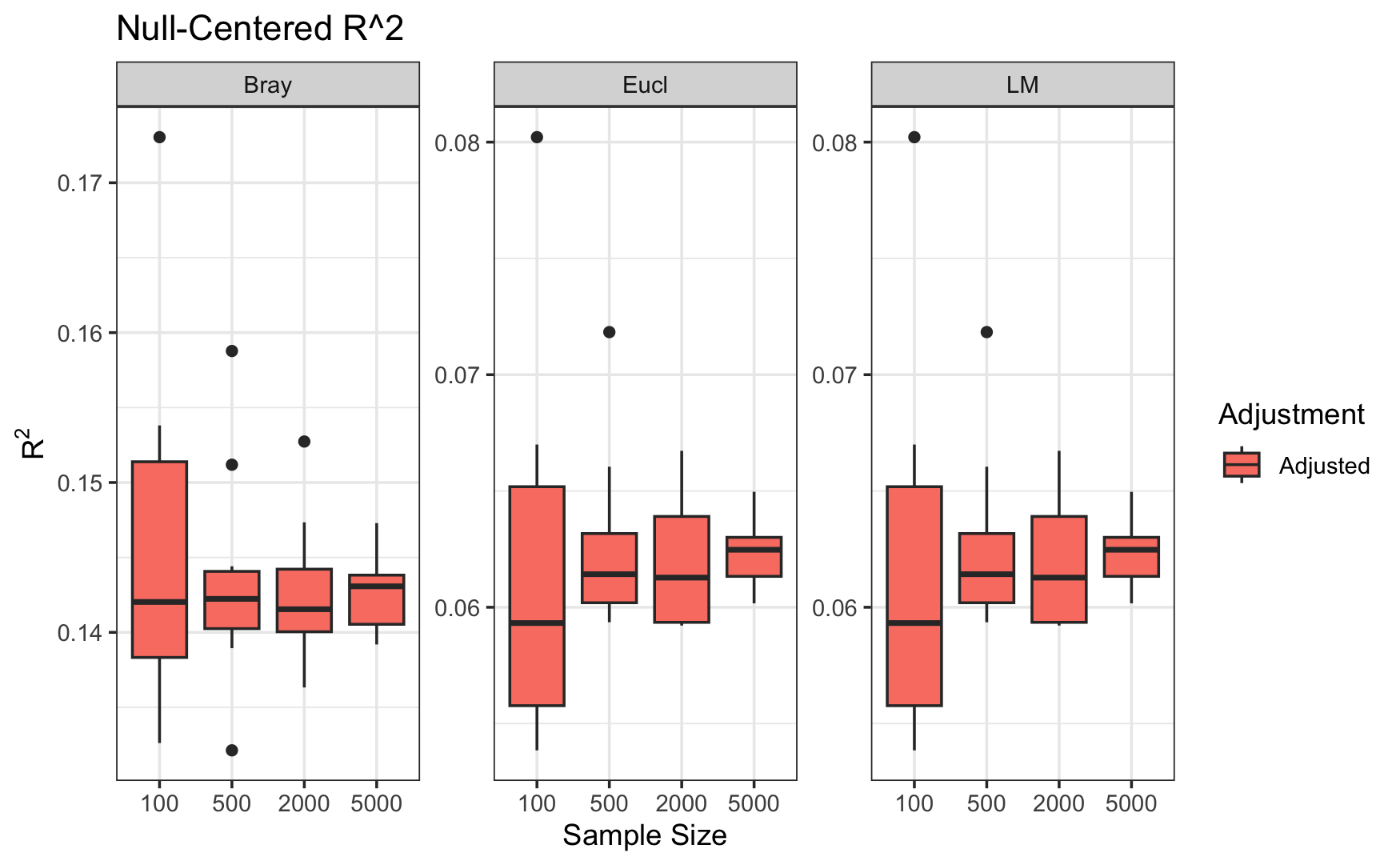}
    \caption{
        Null-centered (B) $R^2$ values across sample sizes ($n$). 
        Each panel represents a distinct analysis method: from left to right, 
        PERMANOVA with Bray-Curtis dissimilarity, PERMANOVA with Euclidean distance, and linear model (LM). 
        Boxplots display the distribution of $R^2$ values across replicates for each sample size and method.
    }
    \label{fig:sub2}
\end{figure}

\begin{figure}
    \centering
    \includegraphics[width=0.5\linewidth]{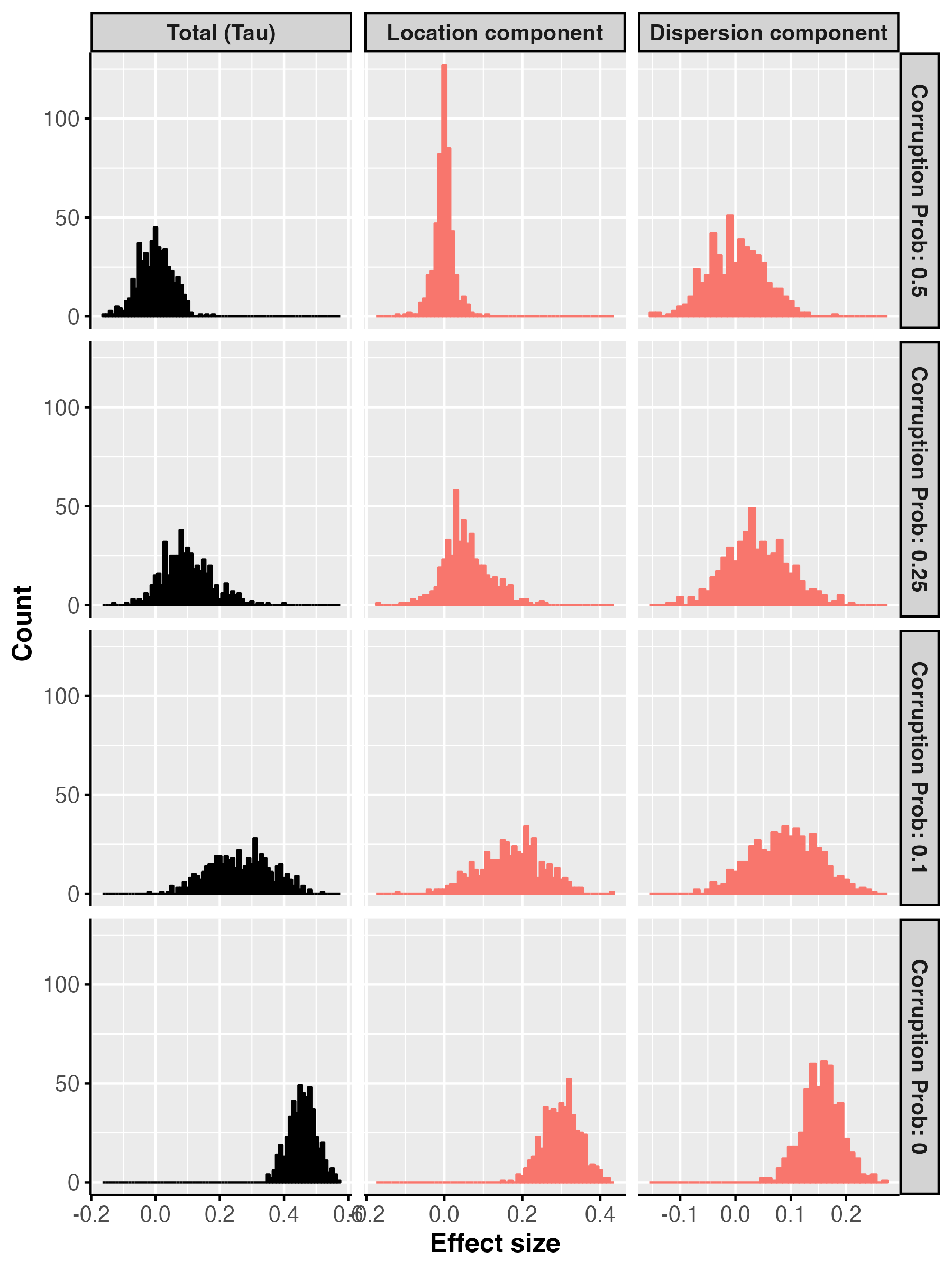}
    \caption{Location-dispersion decomposition of Hájek effect size}
    \label{fig:hajekDecomp}
\end{figure}

\subsection{Validity of Proposed Repeat Measures Permutation in Omnibus Tests}
\label{sec:app:omnibus_validity}

Algorithm~\ref{alg:customInvariant} generates the null distribution by applying
a between-block operator $\mathbf{P}$ that reassigns subject-level covariate
values across subjects, or a within-block operator $\mathbf{Q}$ that shuffles
observation-level covariate values within each subject.
We show here that when all covariates on the right-hand side of the formula are
of the same type---either all invariant (subject-level) or all variant
(observation-level)---the permutation scheme yields an exact omnibus test.
The single-covariate case studied throughout this paper is a special case.
When covariates of mixed types are present, this exactness guarantee does not
apply, and the software implementation of \texttt{permabrioche} guards against this.

\paragraph{Setup.}
Index samples by $(i, j)$: subject $i = 1, \ldots, n$, replicate
$j = 1, \ldots, m_i$, $N = \sum_i m_i$.
Let $\mathbf{W} = [\mathbf{1}, X_1, \dots, X_k]$ be the $N \times q$ model matrix with an intercept and $k$ covariates, all of the same type, coded so that $\mathbf{W}$ has full column rank, and let
$\mathbf{H}_\mathbf{W} = \mathbf{W}(\mathbf{W}^\top\mathbf{W})^{-1}\mathbf{W}^\top$
be its hat matrix.
The omnibus test statistic is
\begin{equation}
\frac{\operatorname{tr}(\mathbf{H_W}\mathbf{G}) / (q-1)}{\operatorname{tr}((\mathbf{I} - \mathbf{H_W})\mathbf{G}) / (N - q)}
  \label{eq:Fomni}
\end{equation}
where $\mathbf{G}$ is Gower's matrix; for a single $g$-level factor, $q = g$ and (2) reduces to (1).
Each permuted replicate forms $\mathbf{W}^*$ by applying a single permutation
to all $k$ columns of $\mathbf{W}$ simultaneously---either a permutation
$\sigma$ of the $n$ subject labels (all-invariant case) or, for each subject
$i$, a permutation $\pi_i$ of the $m_i$ within-subject indices (all-variant
case)---rebuilds $\mathbf{H}_{\mathbf{W}^*}$, and recomputes
$F^{\mathrm{omni}}$. Note that $\mathbf{G}$ is not permuted.

\begin{proposition}[Exactness of the omnibus permutation test]
\label{prop:omnibus_exact}
Suppose all $k$ covariates in $\mathbf{W}$ are of the same type, either
\begin{enumerate}[(i)]
  \item All-invariant: assume subjects are exchangeable under $H_0^{\mathrm{omni}}$
        (between-subject exchangeability).
  \item All-variant: assume observations within each subject are exchangeable
        under $H_0^{\mathrm{omni}}$ (within-subject exchangeability).
\end{enumerate}
Then the omnibus permutation test based on $F^{\mathrm{omni}}$
in~\eqref{eq:Fomni} is exact under $H_0^{\mathrm{omni}}$: the p-value
is uniformly distributed on $\{1/(B+1), 2/(B+1), \ldots, 1\}$,
the set of achievable p-values given $B$ permutations.
\end{proposition}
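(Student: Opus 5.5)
The plan is to invoke the standard randomization-test argument: a permutation test is exact whenever the permutations used form a group and the data distribution is invariant under that group under the null. I will first define the relevant group in each case. In the all-invariant case it is $\mathcal{G}_{\mathrm{inv}} \cong S_n$, acting on sample indices by sending $(i,j)\mapsto(\sigma(i),j)$. This only makes sense as a relabeling of $\mathbf{W}$ because the invariant columns are constant within subjects; if the $m_i$ differ, one restricts to the subgroup permuting subjects of equal size, or equivalently regards the subject-level rows as the units. In the all-variant case it is $\mathcal{G}_{\mathrm{var}} = \prod_i S_{m_i}$, acting by $(i,j)\mapsto(i,\pi_i(j))$. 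In each case the action on the design is $\mathbf{W}\mapsto \mathbf{P}_g\mathbf{W}$ for a permutation matrix $\mathbf{P}_g$, and $\{\mathbf{P}_g\}$ is closed under composition and inverses.

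The key algebraic step is the equivariance identity. Write $F(\mathbf{W},\mathbf{G})$ for the statistic \eqref{eq:Fomni}. Since $\mathbf{H}_{\mathbf{P}\mathbf{W}} = \mathbf{P}\mathbf{H}_\mathbf{W}\mathbf{P}^\top$ and the trace is cyclic,
\[
F(\mathbf{P}\mathbf{W},\mathbf{G}) = F(\mathbf{W},\mathbf{P}^\top\mathbf{G}\mathbf{P}).
\]
Moreover, $\mathbf{P}^\top\mathbf{G}\mathbf{P}$ is precisely the Gower matrix of the permuted sample set, because double centering commutes with simultaneous row and column permutation. So permuting covariate labels is equivalent to permuting the outcome profiles (the rows of the data) by $\mathbf{P}^{-1}$ with the design held fixed. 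Full column rank of $\mathbf{W}$ is preserved under $\mathbf{P}$, so $\mathbf{H}_{\mathbf{W}^*}$ is always well defined. This is where the same-type hypothesis is essential: a mixed design would require acting on different columns by different permutations, and that is not a single $\mathbf{P}$ acting on the rows of $\mathbf{W}$.

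Next I will formalize $H_0^{\mathrm{omni}}$ together with the exchangeability assumption as distributional invariance: conditional on $\mathbf{W}$, the outcome array (equivalently $\mathbf{G}$) satisfies $\mathbf{P}_g^\top\mathbf{G}\mathbf{P}_g \overset{d}{=} \mathbf{G}$ for every $g$ in the relevant group. In case (i) this is exchangeability of whole-subject blocks, where within-subject dependence is carried along intact. In case (ii) it is exchangeability of observations within each subject. Combined with the equivariance identity, this gives that the vector $(F(\mathbf{P}_g\mathbf{W},\mathbf{G}))_{g\in\mathcal{G}}$ has the same distribution as any reindexing of itself by a group element. The observed statistic is therefore exchangeable with the statistics obtained from permutations $g_1,\dots,g_B$ drawn i.i.d. uniformly from $\mathcal{G}$.

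The final step is to derive uniformity of the p-value. I will use the classical argument (Hemerik and Goeman; Lehmann and Romano, Thm.~15.2.1) with $g_0=\mathrm{id}$ and $g_1,\dots,g_B$ uniform. By group invariance, $(T_0,\dots,T_B)$ with $T_b = F(\mathbf{P}_{g_b}\mathbf{W},\mathbf{G})$ is exchangeable, since one can relabel by $g_b^{-1}$. Hence the rank of $T_0$ is uniform on $\{1,\dots,B+1\}$, and $p=\#\{b: T_b\ge T_0\}/(B+1)$ is uniform on the stated grid. I expect the main obstacle to be ties, because Algorithm~\ref{alg:customInvariant} uses a strict inequality, ties occur with positive probability under discrete permutation groups, and the identity permutation can be redrawn. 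With ties, the honest conclusion is super-uniformity, $\Pr(p\le \alpha)\le\alpha$. To obtain exact uniformity I would either assume continuity (almost surely no ties among distinct $\mathbf{P}_g\mathbf{W}$ hat matrices) or break ties with independent uniform randomization. I also plan to remark that, because $F$ is monotone in $R^2$, the result applies equally to the $R^2$-based $p$-value in the algorithm.
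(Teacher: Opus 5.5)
Your route differs from the paper's. The paper argues on the design side. Under $H_0^{\mathrm{omni}}$ it has $\mathbf{G}\perp\!\!\!\perp\mathbf{W}$, and it reads exchangeability as invariance of the law of the covariate rows: the subject-level rows $w_1,\dots,w_n$ in case (i), the within-subject rows in case (ii). It holds $\mathbf{G}$ fixed and concludes $(\mathbf{G},\mathbf{W}^*)\stackrel{d}{=}(\mathbf{G},\mathbf{W})$.

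You instead condition on $\mathbf{W}$ and move the permutation onto the outcome via $F(\mathbf{P}\mathbf{W},\mathbf{G})=F(\mathbf{W},\mathbf{P}^\top\mathbf{G}\mathbf{P})$. That identity is correct, as is your remark that $\mathbf{P}^\top\mathbf{G}\mathbf{P}$ is the Gower matrix of the reordered data. You then read exchangeability as invariance of the law of $\mathbf{G}$.

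Your version supplies something the paper's proof omits. The paper only shows that one permuted statistic has the same law as the observed one. Marginal equality of that kind does not make the rank of $T_0$ among $B+1$ dependent statistics uniform. Your group structure plus the Hemerik--Goeman relabeling gives the joint exchangeability of $(T_0,\dots,T_B)$ that is actually needed. That step also works verbatim in the paper's design-side setting: when the law of $\mathbf{W}$ is group-invariant, $g_b\cdot\mathbf{W}$ is independent of $g_b$ and equal in law to $\mathbf{W}$.

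\textbf{The gap: case (i) with unequal $m_i$.} The setup allows unequal subject sizes, and Algorithm~\ref{alg:customInvariant} reshuffles all subject-level rows regardless of block size. In that case the reassignment $w_i\mapsto w_{\sigma(i)}$ is in general not of the form $\mathbf{W}\mapsto\mathbf{P}\mathbf{W}$ for any permutation matrix. For a binary covariate, the number of treated samples changes from $\sum_i m_i w_i$ to $\sum_i m_i w_{\sigma(i)}$.

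As a result, your equivariance identity is unavailable. A null of the form ``conditional on $\mathbf{W}$, $\mathbf{P}_g^\top\mathbf{G}\mathbf{P}_g\stackrel{d}{=}\mathbf{G}$'' also cannot express swapping blocks of different sizes. Restricting to the subgroup that permutes only equal-size subjects proves exactness of a different test, not the one the algorithm runs. It is also not ``equivalent'' to treating subjects as units.

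To cover the actual scheme you need one of two fixes:
\begin{enumerate}
\item Adopt the paper's design-side formulation. Let $S_n$ act on $(w_1,\dots,w_n)$, hold $\mathbf{G}$ and the block sizes fixed, and assume exchangeable subject-level covariates together with $\mathbf{G}\perp\!\!\!\perp\mathbf{W}$.
\item Treat whole subject records, sizes included, as the exchangeable units and condition only on $(w_1,\dots,w_n)$. In place of your identity, use invariance of $F$ under simultaneous relabeling, $F(\mathbf{P}\mathbf{W},\mathbf{P}\mathbf{G}\mathbf{P}^\top)=F(\mathbf{W},\mathbf{G})$.
\end{enumerate}
Either way, your rank argument then goes through unchanged. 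Case (ii), and case (i) with equal sizes, are fine as written.

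\textbf{Two corrections on ties.} First, the algorithm uses the strict inequality $R^2_{(b)}>R^2_{\mathrm{obs}}$, so ties make $p$ smaller and the consequence is possible anti-conservativeness. Super-uniformity holds for your $\ge$ version, not for the algorithm's.

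Second, assuming continuity does not remove ties. Any draw $g_b$ in the stabilizer of $\mathbf{W}$ gives $T_b=T_0$ exactly, with positive probability. Examples are the identity, or a swap of two subjects (or two within-subject time points) carrying equal covariate values. So only randomized tie-breaking yields exact uniformity on $\{1/(B+1),\dots,1\}$. This also shows the proposition's exact-uniformity claim needs that qualification.
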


\begin{proof}
We establish two facts and combine them.

\medskip
\noindent\textit{Fact 1: $\mathbf{G} \perp\!\!\!\perp \mathbf{W}$ under
$H_0^{\mathrm{omni}}$.}
The omnibus null hypothesis asserts that the outcome $Y$ is jointly independent
of the entire covariate design \citep{anderson2001new, anderson2008permanova+}:
\[
  H_0^{\mathrm{omni}}:\quad Y \;\perp\!\!\!\perp\; (X_1,\ldots,X_k).
\]
Since $\mathbf{G}$ is a deterministic function of $Y$ (computed via the chosen
dissimilarity measure, squaring, and double-centring), $\mathbf{G} \perp\!\!\!\perp \mathbf{W}$.
The joint distribution of $(\mathbf{G}, \mathbf{W})$ therefore factors as
$p(\mathbf{G})\cdot p(\mathbf{W})$.

\medskip
\noindent\textit{Fact 2: the permutation preserves $p(\mathbf{W})$.}
Because all covariates are of the same type, the permutation applies a
\emph{single} index vector to all $k$ columns of $\mathbf{W}$ simultaneously.
This is equivalent to permuting the row labels of $\mathbf{W}$.

\begin{itemize}
  \item \emph{All-invariant.} Each column of $\mathbf{W}$ is constant within
    subjects, so $\mathbf{W}$ is determined by the $n$ subject-level row vectors
    $(w_1, \ldots, w_n)$.
    The permutation $\sigma$ reorders these to
    $(w_{\sigma(1)}, \ldots, w_{\sigma(n)})$.
    By between-subject exchangeability, the joint distribution of subject-level
    covariate vectors is invariant to permutation of subject labels, so
    $\mathbf{W}^* \stackrel{d}{=} \mathbf{W}$.

  \item \emph{All-variant.} Within subject $i$, the permutation $\pi_i$
    reorders the $m_i$ rows of that subject's block of $\mathbf{W}$, with the
    same $\pi_i$ applied to all $k$ columns.
    By within-subject exchangeability, the joint distribution of each subject's
    multivariate observation sequence is invariant to such reordering, so
    $\mathbf{W}^* \stackrel{d}{=} \mathbf{W}$.
\end{itemize}

\medskip
\noindent\textit{Conclusion.}
Combining Facts 1 and 2:
\[
  p(\mathbf{G},\,\mathbf{W}^*)
  \;=\; p(\mathbf{G})\cdot p(\mathbf{W}^*)
  \;=\; p(\mathbf{G})\cdot p(\mathbf{W})
  \;=\; p(\mathbf{G},\,\mathbf{W}),
\]
where the first equality uses $\mathbf{G} \perp\!\!\!\perp \mathbf{W}^*$
(since $\mathbf{W}^*$ is a deterministic function of $\mathbf{W}$ and $\mathbf{G}
\perp\!\!\!\perp \mathbf{W}$), and the second uses
$\mathbf{W}^* \stackrel{d}{=} \mathbf{W}$.
Therefore
\[
  F^{\mathrm{omni}}(\mathbf{G},\,\mathbf{W}^*)
  \;\stackrel{d}{=}\;
  F^{\mathrm{omni}}(\mathbf{G},\,\mathbf{W})
  \quad \text{under } H_0^{\mathrm{omni}},
\]
and the permutation p-value is exact up to
discretization.
\end{proof}

\begin{remark}[Single covariate]
The single-covariate setting studied throughout this paper is a special case of
the all-same-type setting with $k = 1$. A single invariant covariate is covered
by case~(i); a single variant covariate by case~(ii).
\end{remark}

\subsubsection{Details for Figures \ref{fig:sub1} and \ref{fig:sub2}}
\label{sec:app:QMD}.

We used real metagenomic data from human stool samples as provided by the \texttt{curatedMetagenomicData} \cite{pasolli2017accessible} R/Bioconductor package. We filtered for control samples from the \texttt{sampleMetadata} data set, restricted to the stool body site. To reduce repeated measures, we retained only the first visit from each subject. We additionally restricted to studies with more than 1000 samples. The corresponding relative abundance profiles were retrieved, filtered to features present in at least 10\% of all samples, and further filtered to samples with total read counts above 90. Feature abundances were subsequently re-normalized to sum to 1 per sample.

\subsubsection{Simulation Design}

We evaluated the properties of unadjusted and null-centered $R^2$ by repeatedly sampling $N \in \{100, 500, 2000, 5000\}$ samples (with replacement) from the dataset, performing 10 independent replicates for each $n$. For each replicate, we computed $R^2$ across several methods:
\begin{itemize}
  \item \text{PERMANOVA} with Bray-Curtis dissimilarity (\texttt{vegan::adonis2})
  \item \text{PERMANOVA} with Euclidean distance (\texttt{vegan::adonis2})
  \item \text{Linear model} (LM) with study as a categorical covariate, applied to log relative abundances
\end{itemize}

\subsubsection{$R^2$ and null-centered $R^2$ Calculation}

For each sampled dataset, we computed both the standard (unadjusted) $R^2$ and the adjusted $R^2$:
\[
R^2 = 1 - \frac{\mathrm{RSS}}{\mathrm{TSS}}, \qquad
\tilde R^2 = R^2 - \frac{g-1}{N-1},
\]
where $N \in \{100, 500, 2000, 5000\}$, and $g = 2.$

\subsubsection{Relation to $\eta^2$ and $\omega^2$} \label{sec:app:omega2}
The $R^2$ values reported for PERMANOVA correspond conceptually to $\eta^2$-type effect sizes, representing the proportion of variation in the distance matrix attributable to each factor. While $\eta^2$ is known to be biased upward in finite samples and bias‑corrected estimators such as $\omega^2$ have been proposed for univariate models, their extension to distance‑based permutation frameworks is not standardized. Accordingly, $R^2$ values are interpreted here as descriptive, model‑dependent measures of effect magnitude rather than unbiased estimates of population variance explained by the covariate.

\subsubsection{Location--Dispersion Decomposition}

\label{sec:app:locationDispersion}

\begin{proof}
Let $X_i \in \mathbb{R}^p$ denote the feature vector for an observation in the test set $(i\in \mathcal{I}_{test})$. 
Define the Euclidean outcome
\[
Y_i = \|X_i -  \hat \mu_{0,\mathrm{train}}\|_2^2.
\]
For $t \in \{0,1\}$, let
\[
\mu_{t,\mathrm{test}} = \mathbb{E}[X \mid T=t],
\qquad
\Sigma_t = \operatorname{Cov}(X \mid T=t)
\]
denote the conditional mean and covariance of $X$ in the test population.

For any $x, \mu, c \in \mathbb{R}^p$, the following identity holds:
\begin{align*}
\|x - c\|_2^2
&= \|(x - \mu) + (\mu - c)\|_2^2\\
&= \sum_{i=1}^p (x_i - \mu_i + \mu_i - c_i)^2\\
&= \sum_{i=1}^p(x_i-\mu_i)^2 + (\mu_i - c_i)^2 + 2(x_i - \mu_i)(\mu_i - c_i)\\
&= \|x - \mu\|_2^2 + \|\mu - c\|_2^2 + 2(x-\mu)^\top(\mu-c).
\end{align*}

Taking conditional expectation given $T=t$, with $\mu=\mu_{t,\mathrm{test}}$, yields
\begin{align*}
\mathbb{E}\!\left[\|X - c\|_2^2 \mid T=t\right]
&= \mathbb{E}\!\left[\|X - \mu_{t,\mathrm{test}}\|_2^2 \mid T=t\right]
+ \|\mu_{t,\mathrm{test}} - c\|_2^2 \\
&\quad + 2\,\mathbb{E}\!\left[(X-\mu_{t,\mathrm{test}})^\top(\mu_{t,\mathrm{test}}-c)\mid T=t\right].
\end{align*}
The cross term vanishes because $\mathbb{E}[X-\mu_{t,\mathrm{test}}\mid T=t]=0$. Moreover,
\[
\mathbb{E}\!\left[\|X - \mu_{t,\mathrm{test}}\|_2^2 \mid T=t\right]
= \operatorname{tr}\!\left(\operatorname{Cov}(X\mid T=t)\right)
= \operatorname{tr}(\Sigma_t).
\]
Thus,
\[
\mathbb{E}\!\left[\|X - c\|_2^2 \mid T=t\right]
= \|\mu_{t,\mathrm{test}} - c\|_2^2 + \operatorname{tr}(\Sigma_t).
\]

Subtracting the expressions for $t=0$ and $t=1$ gives
\[
\mathbb{E}[Y\mid T=1] - \mathbb{E}[Y\mid T=0]
=
\big(\|\mu_{1,\mathrm{test}} - c\|_2^2 - \|\mu_{0,\mathrm{test}} - c\|_2^2\big)
+
\big(\operatorname{tr}(\Sigma_1) - \operatorname{tr}(\Sigma_0)\big).
\]
This yields the decomposition
\[
\tau = \tau_{\mathrm{loc}} + \tau_{\mathrm{disp}},
\]
where $\tau_{\mathrm{loc}}$ captures differences in location (centroid shift) and
$\tau_{\mathrm{disp}}$ captures differences in dispersion (within‑group variability).
\end{proof}

\textbf{Extension to estimated propensity scores.}
The argument above uses only the fact that the cross term vanishes when $X$ is
centred at its own mean. It therefore holds for any normalized weights. Let
$w_i^{(1)} \propto T_i/\hat e_i$ and $w_i^{(0)} \propto (1-T_i)/(1-\hat e_i)$ be the
H\'ajek weights, normalized to sum to one within each arm. Replacing the
conditional expectations with weighted means gives the weighted centroid
$\tilde\mu_t = \sum_i w_i^{(t)} X_i$ and the weighted total variance
$\operatorname{tr}(\tilde\Sigma_t) = \sum_i w_i^{(t)}\lVert X_i - \tilde\mu_t\rVert_2^2$.
Because $\sum_i w_i^{(t)}(X_i - \tilde\mu_t) = 0$, the same steps give
\[
\hat\tau
= \lVert\tilde\mu_1 - c\rVert_2^2 - \lVert\tilde\mu_0 - c\rVert_2^2
+ \operatorname{tr}(\tilde\Sigma_1) - \operatorname{tr}(\tilde\Sigma_0),
\]
so the decomposition holds exactly for the H\'ajek estimate under any propensity
model. With a constant propensity the weights are uniform, and this reduces to
the unweighted case above.

\subsubsection{Implementation Notes}
\begin{itemize}
\item All analyses used log-transformed relative abundances with pseudocount $1\times10^{-5}$ to avoid $\log(0)$.
\item All results are summarized as distributions of $R^2$ values over simulation replicates.
\item Number of permutations $999$ throughout.
\end{itemize}


\subsection{Volcano Plots}
\label{sec:app:hmp2_volcanos}
\begin{figure*}[h!]
    \centering
\includegraphics[width=\textwidth]{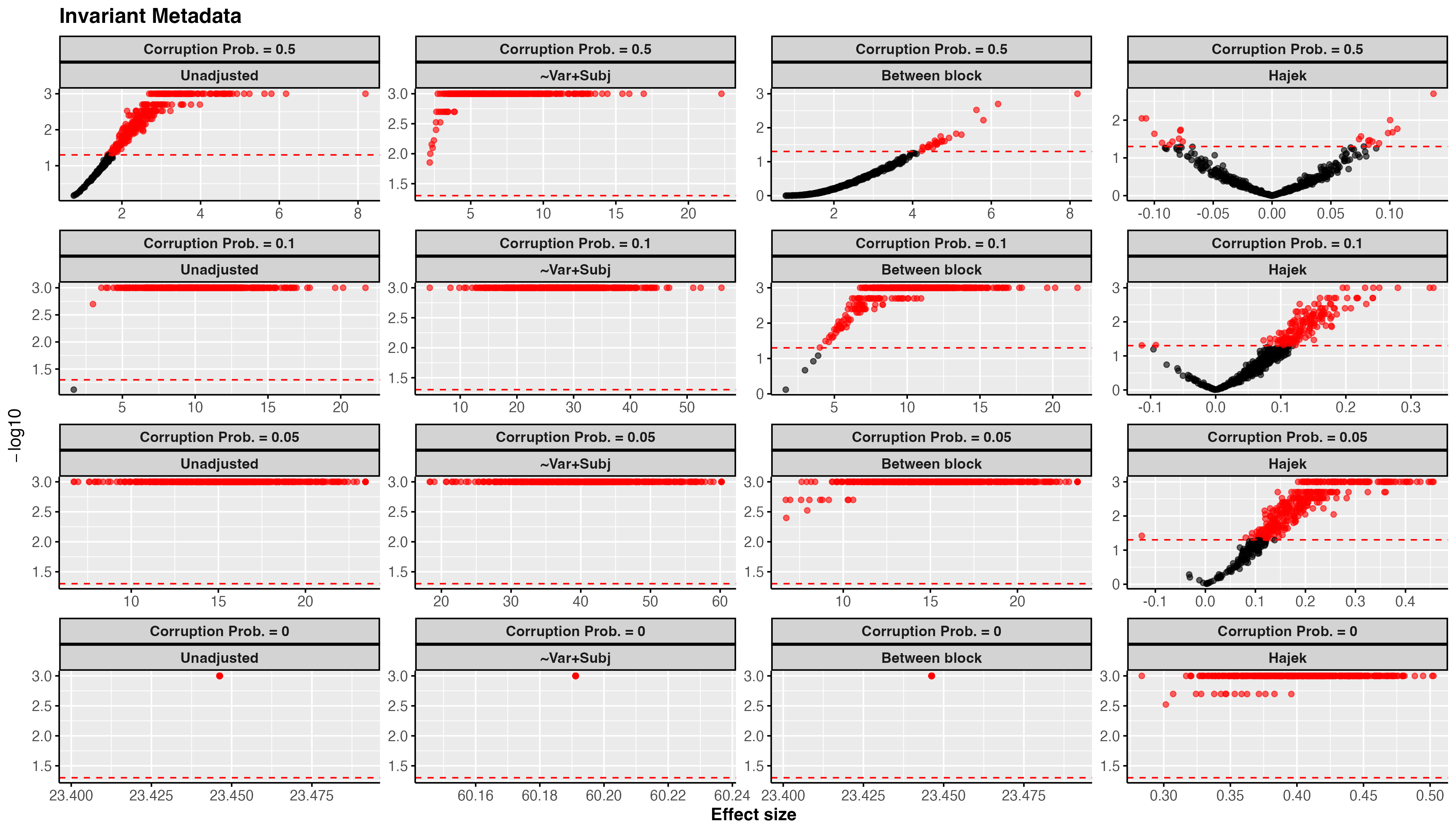}
    \caption{Volcano plot for Invariant Metadata}
    \label{fig:hmp2_invariantVolcano}
\end{figure*}

\begin{figure*}[h!]
    \centering
\includegraphics[width=\textwidth]{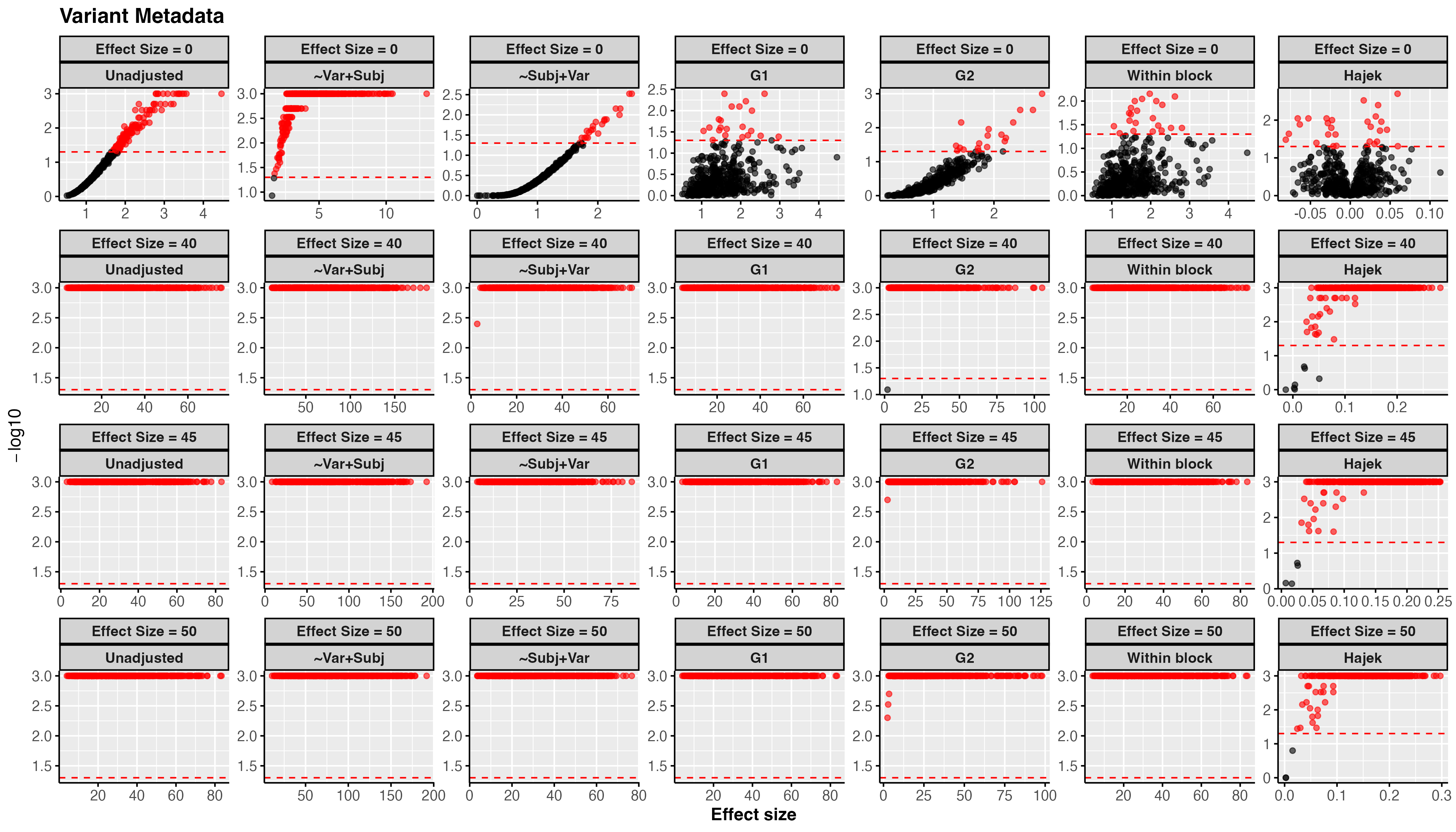}
    \caption{Volcano plot for Variant Metadata}
    \label{fig:hmp2_variantVolcano}
\end{figure*}

\subsection{Nasal Microbiome Data Analysis}\label{sec:app:staph}
\subsubsection{Invariant Metadata}

\begin{figure*}[h!]
    \centering
\includegraphics[width=\textwidth]{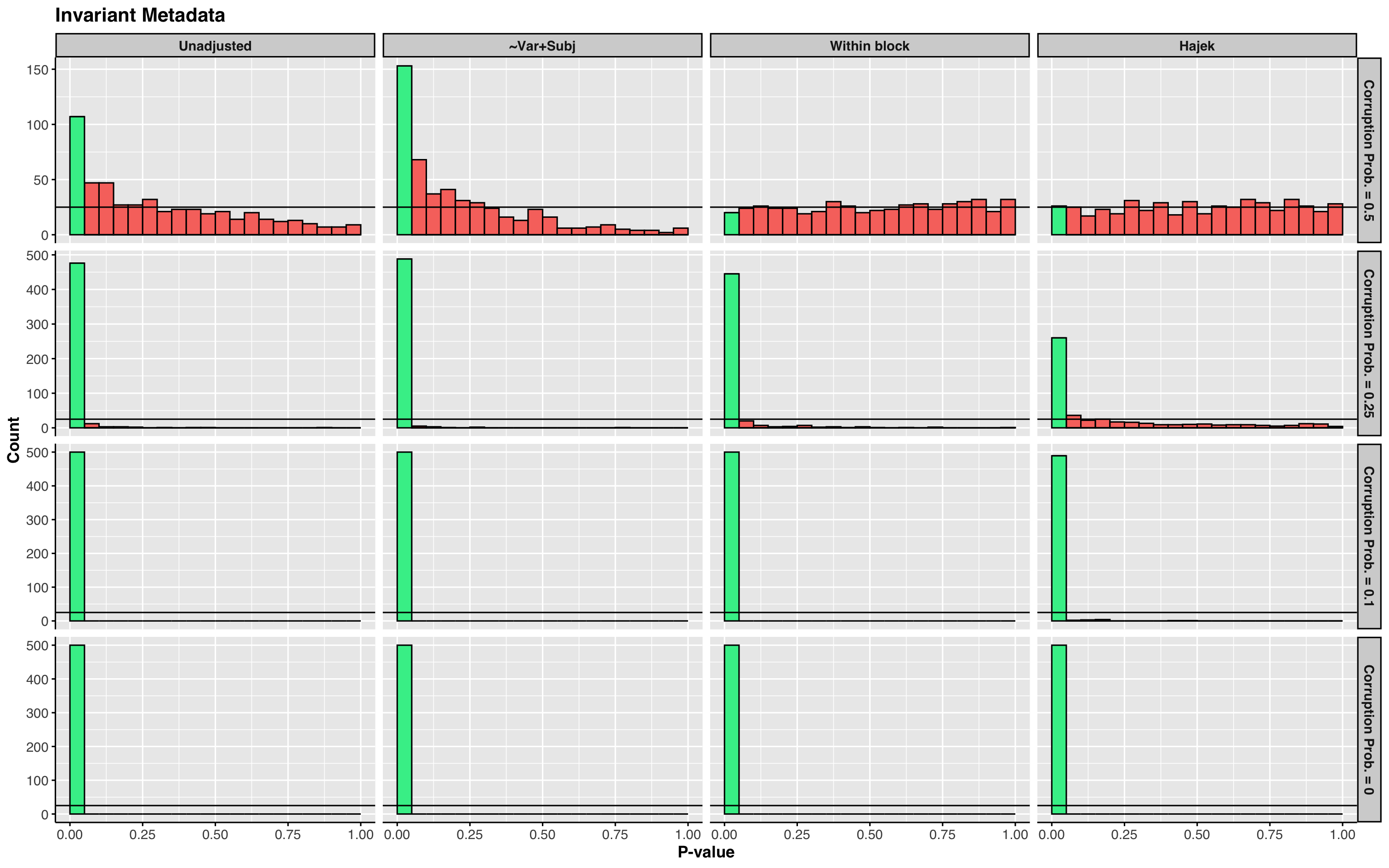}
    \caption{Empirical distribution of p-values for invariant metadata}
    \label{fig:invariantPval}
\end{figure*}

\begin{table}[!h]

\caption{Kolmogorov--Smirnov test for Each Method - Invariant Setting}
\centering
\begin{tabular}[t]{c|ccc}
\hline
Method & K-S Statistic & K-S p-value & FPR\\
\hline
Hájek & \textbf{0.040} & \textbf{0.400} & \textbf{0.052} \\
Between Block PERMANOVA & 0.050 & 0.164 & 0.040\\
Unadj. PERMANOVA & 0.276 & 0.000 & 0.214 \\
Var+Subj PERMANOVA & 0.423 & 0.000 & 0.306 \\
\hline
\end{tabular}
\label{tab:invariantTab}
\end{table}

\subsubsection{Variant Metadata}

\begin{figure*}[h!]
    \centering
\includegraphics[width=\textwidth]{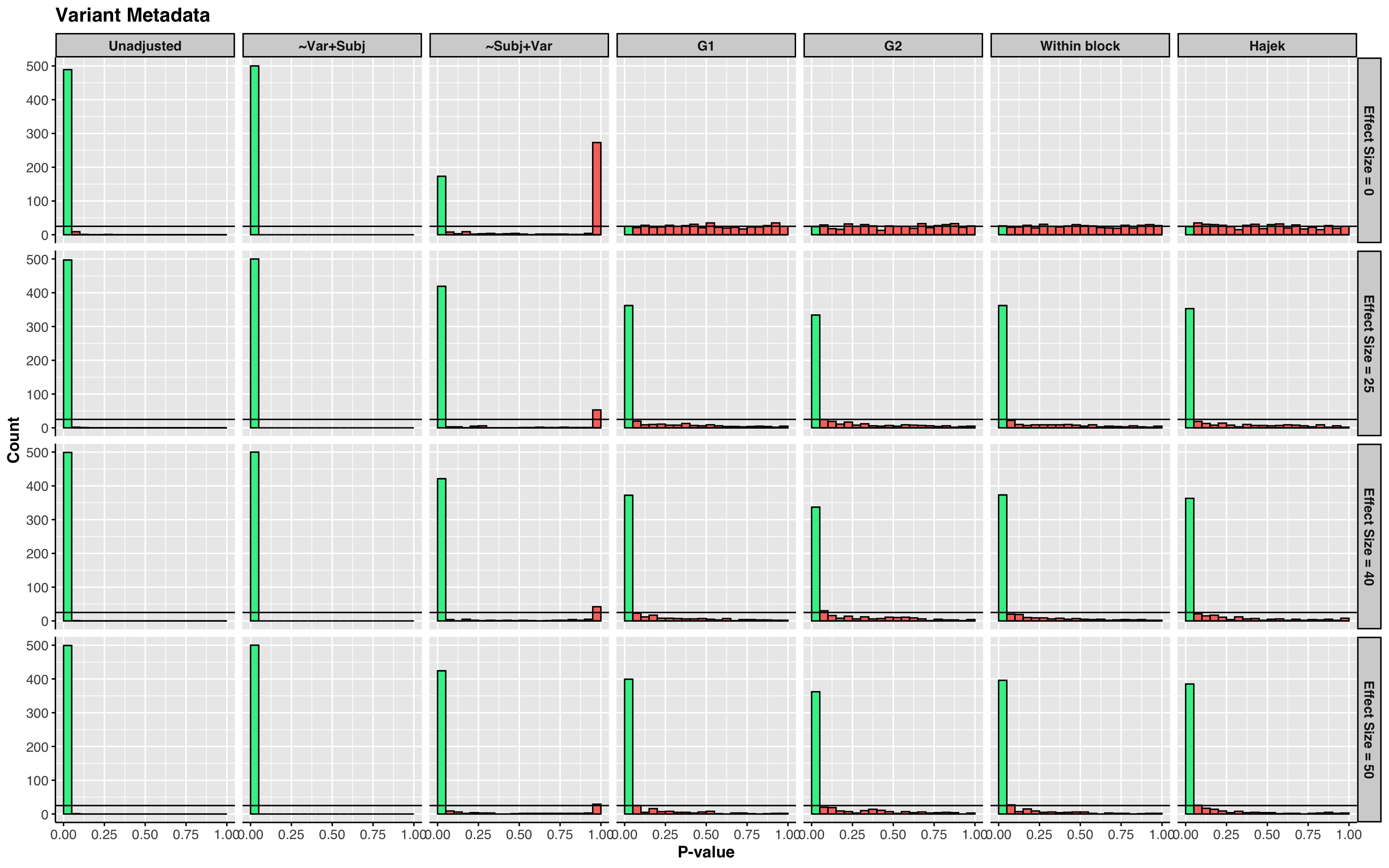}
    \caption{Empirical distribution of p-values for variant metadata}
    \label{fig:variantPval}
\end{figure*}

\begin{table}[!h]

\caption{Kolmogorov--Smirnov Statistics for Each Method - Variant Setting}
\centering
\begin{tabular}[t]{c|cccc}
\hline
Method & K-S Statistic & K-S p-value & FPR\\
\hline
Hájek & 0.057 & 0.078 & 0.048\\
Within Block PERMANOVA & \textbf{0.028} & \textbf{0.828} & 0.052 \\
G1 PERMANOVA & 0.031 & 0.723 & \textbf{0.050} \\
G2 PERMANOVA & 0.035 & 0.573 & 0.048 \\
$\sim$ Subj+Var PERMANOVA & 0.526 & 0.000 & 0.346 \\
$\sim$ Var+Subj PERMANOVA & 0.999 & 0.000 & 1.000 \\
Unadj. PERMANOVA & 0.933 & 0.000 & 0.978 \\
\hline
\end{tabular}
\label{tab:variantTab}
\end{table}

\subsubsection{Volcano Plots}
\label{sec:app:volcanos}
\begin{figure*}[h!]
    \centering
\includegraphics[width=\textwidth]{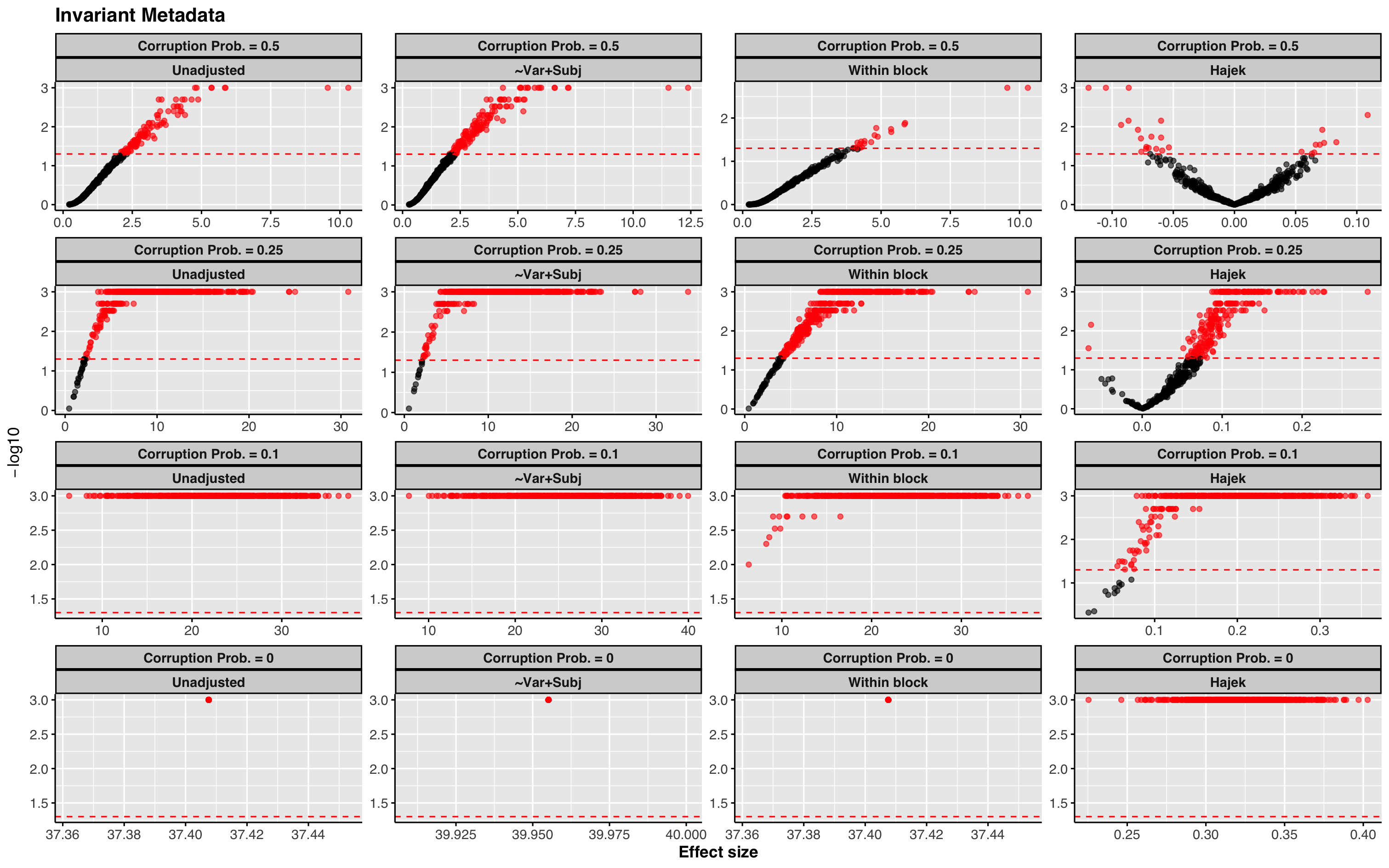}
    \caption{Volcano plot for Invariant Metadata}
    \label{fig:invariantVolcano}
\end{figure*}

\begin{figure*}[h!]
    \centering
\includegraphics[width=\textwidth]{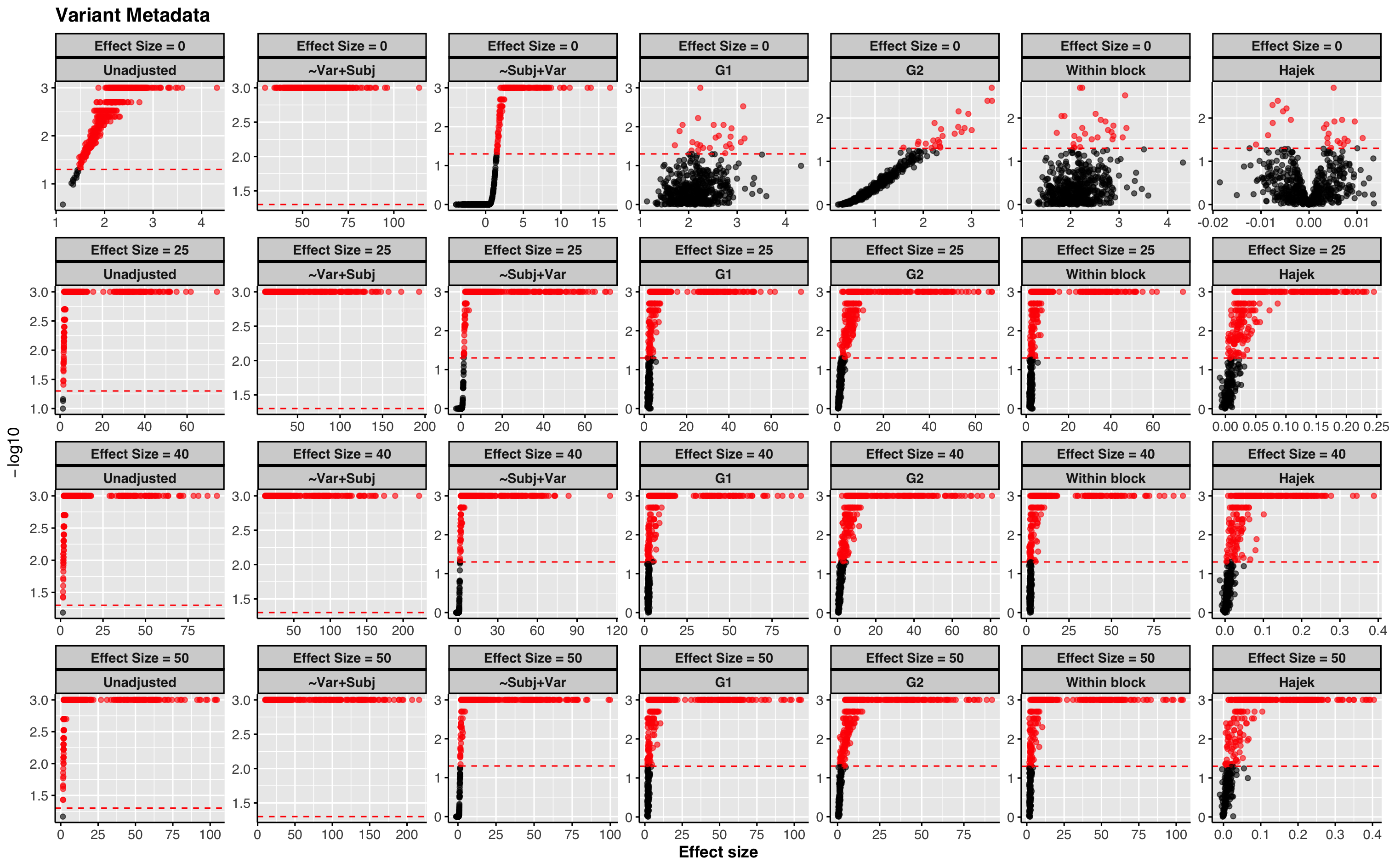}
    \caption{Volcano plot for Variant Metadata}
    \label{fig:variantVolcano}
\end{figure*}



\end{document}